\newtheorem{theorem}{Theorem}
\newtheorem{corollary}[theorem]{Corollary}
\newtheorem{lemma}[theorem]{Lemma}
\newtheorem{proposition}[theorem]{Proposition}
\newenvironment{proof}[1][Proof]{\noindent\textbf{#1.} }{\ \rule{0.5em}{0.5em}}
\begin{document}

\title{{\Large \textbf{Lie point symmetries of a general class of PDEs: The
heat equation}}}
\author{Andronikos Paliathanasis\thanks{%
Email: anpaliat@phys.uoa.gr} \ and Michael Tsamparlis\thanks{%
Email: mtsampa@phys.uoa.gr} \\
%EndAName
{\small \textit{Faculty of Physics, Department of Astrophysics - Astronomy -
Mechanics,}}\\
{\small \textit{\ University of Athens, Panepistemiopolis, Athens 157 83,
GREECE}}}
\date{}
\maketitle

\begin{abstract}
We give two theorems which show that the Lie point and the Noether
symmetries of a second-order ordinary differential equation of the form $%
\frac{D}{Ds}\left( \frac{Dx^{i}\left( s\right) }{Ds}\right) =F(x^{i}\left(
s\right) ,\dot{x}^{j}\left( s\right) )$ are subalgebras of the special
projective and the homothetic algebra of the space respectively. We examine
the possible extension of this result to partial differential equations
(PDE) of the form $A^{ij}u_{ij}-F(x^{i},u,u_{i})=0$ where $u(x^{i})$ and $%
u_{ij}$ stands for the second partial derivative. We find that if the
coefficients $A^{ij}$ are independent of $u(x^{i})$ then the Lie point
symmetries of the PDE form a subgroup of the conformal symmetries of the
metric defined by the coefficients $A^{ij}$. We specialize the study to
linear forms of $F(x^{i},u,u_{i})$ and write the Lie symmetry conditions for
this case. We apply this result to two cases. The wave equation in an
inhomogeneous medium for which we derive the Lie symmetry vectors and check
our results with those in the literature. Subsequently we consider the heat
equation with a flux in an $n-$dimensional Riemannian space and show that
the Lie symmetry algebra is a subalgebra of the homothetic algebra of the
space. We discuss this result in the case of de Sitter space time and in
flat space.
\end{abstract}

Keywords: Lie point symmetries, Homothetic motions, Partial Differential
Equations, Heat Equation.

PACS - numbers: 2.40.Hw, 4.20.-q, 4.20.Jb, 04.20.Me, 03.20.+i, 02.40.Ky

\section{Introduction}

\label{Introduction}

A Lie point symmetry of an ordinary differential equation (ODE) is a point
transformation in the space of variables which preserves the set of
solutions of the ODE \cite{Bluman book ODES,Olver Book,Stephani book ODES}.
If we look at these solutions as curves in the space of variables, then we
may equivalently consider a Lie point symmetry as a point transformation
which preserves the set of solution curves. Applying this observation to the
geodesic curves in \ a Riemannian (affine) space, we infer that the Lie
point symmetries of the geodesic equations in any Riemannian (affine) space
are the automorphisms which preserve the set of these curves. However, it is
known by Differential Geometry that the point transformations of a
Riemannian (affine)\ space which preserve the set of geodesics are the
projective transformations. Therefore it is reasonable to expect a
correspondence between the Lie symmetries of the geodesic equations and the
projective algebra of the space.

The equation of geodesics in an arbitrary coordinate frame is a second-order
ODE of the form%
\begin{equation}
\ddot{x}^{i}+\Gamma _{jk}^{i}\dot{x}^{j}\dot{x}^{k}+F(x^{i},\dot{x}^{j})=0
\label{de.0}
\end{equation}%
where $F(x^{i},\dot{x}^{j})$ is an arbitrary function of its arguments and
the functions $\Gamma _{jk}^{i}$ are the connection coefficients of the
space.\footnote{%
This point of view can be generalized to the general second order ODE
provided the functions $\Gamma _{jk}^{i}$ can be identified with the
connection coefficients of a metric.} Equivalently equation (\ref{de.0}) is
the equation of motion of a dynamical system moving in a Riemannian (affine)
space under the action of a velocity dependent force. According to the above
argument we expect that the Lie symmetries of the ODE (\ref{de.0}) for a
given function $F(x^{i},\dot{x}^{j})$ are a subalgebra of the projective
algebra of the space. This subalgebra is selected by means of certain
constraint conditions which will involve geometric quantities of the space
and the function $F(x^{i},\dot{x}^{j})$. This approach is not new and
similar considerations can be found in \cite{Prince Crampin (1984) 1,Aminova
2006,Aminova2010,FerozeMahomedQadir,TsamparlisGRG,TsamparlisGRG2}.

The determination of the Lie point symmetries of a given system of ODEs
consists of two steps (a)\ the determination of the conditions which the
components of the Lie symmetry vectors must satisfy and (b) the solution of
the system of these conditions. Step (a) is formal and it is outlined in
e.g. \cite{Bluman book ODES,Olver Book,Stephani book ODES}. The second step
is the key one and, for example, in higher dimensions where one has a large
number of simultaneous equations the solution can be quite involved.
However, if we express the system of Lie symmetry conditions of (\ref{de.0})
in a Riemannian space in terms of collineation (i.e. symmetry) conditions of
the metric, then the determination of Lie symmetries is transferred to the
geometric problem of determining the generators of the projective group of
the metric. In this field there is a vast amount of work that is already
done to be used. Indeed the projective symmetries are already known in many
cases or they can be determined by existing general theorems of Differential
Geometry. For example the projective algebra and all its subalgebras are
known for the spaces of constant curvature \cite{Barnes} and in particular
for the flat spaces. This implies, for example, that the Lie symmetries of 
\emph{all} Newtonian dynamical systems are "known" and the same applies to
dynamical systems in Special Relativity!

In this work we state a theorem which establishes the exact relation between
the projective algebra of the space and the Lie symmetry algebra of (\ref%
{de.0}), assuming that the function $F$ depends only on the coordinates,
i.e. $F(x^{i}).$

What has been said for the Lie point symmetries of (\ref{de.0}) applies also
to Noether symmetries (provided (\ref{de.0}) follows from a Lagrangian). The
Noether symmetries are Lie point symmetries which satisfy the constraint%
\begin{equation}
X^{\left[ 1\right] }L+L\frac{d\xi }{dt}=\frac{df}{dt}.  \label{L2p.3}
\end{equation}%
Noether symmetries form a closed subalgebra of the Lie symmetries algebra.
In accordance to the above, this implies that the Noether symmetries will be
related with a subalgebra of the projection algebra of the space where
`motion' occurs. As it will be shown, this subalgebra is contained in the
homothetic algebra of the space.

As it is well known, each Noether point symmetry is associated conserved
current (i.e. first integral), hence the above imply that the (standard)
conserved quantities of a dynamical system depend on the space it moves and
the type of force $F(x^{i},\dot{x}^{j})$ which modulates the motion. In
particular, in `free fall', i.e. in the case $F(x^{i},\dot{x}^{j})=0$ the 
\emph{geometry} of the space is the sole factor which determines the
(standard) first integrals of motion. This conclusion is by no means trivial
and shows the deep relation between Geometry and Physics!

A natural question which arises is the following:

\emph{To what extend this correspondence of Lie/Noether symmetries of second
order ODES of the form (\ref{de.0}), with the collineations of the space, is
extendable to partial differential equations of second-order of a similar
form?}

Obviously, a global answer to this question is not possible. However, it can
be shown that for many interesting PDEs the Lie symmetries are indeed
obtained from the collineations of the metric. Pioneering work in this
direction is the work of Ibragimov \cite{Ibragimov book 1}. Recently,
Bozhkov et al. \cite{Boskov} studied the Lie and the Noether symmetries of
the Poisson equation and shown that the Lie symmetries of the Poisson PDE
are generated from the conformal algebra of the metric.

In the present work we show that for a general class of PDEs of
second-order, there is a close relation between the Lie symmetries and the
conformal algebra of the space. Subsequently we apply these results to a
number of interesting PDEs and regain existing results in a unified manner.
As a new application, we determine the Lie symmetries of the heat equation
with a flux in a $n-$dimensional Riemannian space.

The structure of the paper is as follows. In Section \ref{Collineations of
Riemannian spaces} we review briefly the collineations in a Riemannian
space. In Section \ref{The symmetry conditions using Lie symmetry methods}
we consider the equation of geodesics in an affine space and determine the
Lie symmetry conditions in covariant form. We find that the major symmetry
condition relates the Lie symmetries with the special projective algebra of
the space. A similar result has been obtained previously in \cite{Prince
Crampin (1984) 1} using the bundle formulation of second order ODEs.

In Section \ref{The conservative system} we solve, in a concise manner, the
symmetry conditions and state Theorem \ref{The general conservative system}
which gives the Lie symmetry vectors in terms of the collineations of the
metric and Theorem \ref{The Noether Theorem} which gives the Noether point
symmetries in terms of the homothetic algebra of the metric.

In Section \ref{The case of the second order PDE's} we consider the PDEs of
the form $A^{ij}u_{ij}-F(x^{i},u,u_{i})=0$ and derive the Lie symmetry
conditions. We show that for these PDEs the $\xi ^{i}(x^{j})$ provided $%
A^{ij}\neq 0$ and if $A_{,u}^{ij}=0$ then $\xi ^{i}(x^{j})\partial _{i}$ is
a Conformal Killing vector of the metric $A^{ij}$.

In Section \ref{The Lie symmetry conditions for a linear function} we
consider a linear form of $F(x^{i},u,u_{i})$ and determine the Lie symmetry
conditions in geometric form. In section \ref{Applications} we apply the
results of section \ref{The Lie symmetry conditions for a linear function}
to the wave equation in a two dimensional inhomogeneous medium and to the
heat equation with flux in a $n-$dimensional Riemannian space. In the latter
case, it is shown that the Lie symmetry vectors are obtained form the
homothetic algebra of the $A^{ij}$ metric. A similar result has been found
for the Poisson equation \cite{Boskov}.

Finally in Section \ref{Conclusions} we comment on the results and point out
possible directions for future research.

\section{Collineations of Riemannian spaces}

\label{Collineations of Riemannian spaces}

A\ collineation in a Riemannian space is a vector field $\mathbf{X}$ which
satisfies an equation of the form%
\begin{equation}
\mathcal{L}_{X}\mathbf{A=B}  \label{L2p.2}
\end{equation}%
where $\mathcal{L}_{X}$ denotes Lie derivative, $\mathbf{A}$ is a geometric
object (not necessarily a tensor)\ defined in terms of the metric and its
derivatives (e.g. connection, Ricci tensor, curvature tensor etc.) and $%
\mathbf{B}$ is an arbitrary tensor with the same tensor indices as $\mathbf{A%
}$. The collineations in a Riemannian space have been classified by Katzin
et al. \cite{Katzin}. In the following we use only certain collineations.

A conformal Killing vector (CKV)\ is defined by the relation 
\begin{equation}
\mathcal{L}_{X}g_{ij}=2\psi \left( x^{k}\right) g_{ij}.
\end{equation}%
If $\psi =0,$ $\mathbf{X}\ $is called a Killing vector (KV). If $\psi $ is a
nonvanishing constant then $\mathbf{X~}$is a homothetic vector (HV) and if $%
\psi _{;ij}=0,$ $\mathbf{X}$ is a special conformal Killing vector (SCKV). A
CKV is called proper if it is not a KV, a HV or a SCKV.

A Projective collineation (PC) is defined by the equation 
\begin{equation}
\mathcal{L}_{X}\Gamma _{jk}^{i}=2\phi _{(,j}\delta _{k)}^{i}.
\end{equation}%
If $\phi =0,$ the PC\ is called an affine collineation (AC) and if $\phi
_{;ij}=0,$ a special projective collineation (SPC). A\ proper PC\ is a PC\
which is not an AC, HV or KV or SPC. The PCs form a Lie algebra whose ACs,
HV and KVs form subalgebras. It has been shown that if a metric admits a
SCKV, then also admits a SPC, a gradient HV and a gradient KV \cite{HallR}.

In the following we shall need the symmetry algebra of spaces of constant
curvature. In \cite{Barnes} it has been shown that the PCs of a space of
constant nonvanishing curvature consist of proper PCs and KVs only and if
the space is flat then the algebra of the PCs consists of KVs/HV/ACs and
SPCs.

In particular, for the Euclidian space $E^{n}$ the projection algebra
consists of the vectors of in Table 1.

\begin{center}
.%
\begin{tabular}{|l|l|l|}
\multicolumn{3}{l}{Table 1: Collineations of Euclidean space $%
E^{n}~,~I,J=1,2,\ldots ,n$.} \\ \hline
Collineation & Gradient & Nongradient \\ \hline
\multicolumn{1}{|l|}{Killing vectors (KV)} & $\mathbf{S}_{I}=\delta
_{I}^{i}\partial _{i}$ & $\mathbf{X}_{IJ}=\delta _{\lbrack I}^{j}\delta
_{J]}^{i}x_{j}\partial _{i}$ \\ \hline
\multicolumn{1}{|l|}{Homothetic vector (HV)} & $\mathbf{H}=x^{i}\partial
_{i}~$ &  \\ \hline
\multicolumn{1}{|l|}{Affine Collineation (AC)} & $\mathbf{A}%
_{II}=x_{I}\delta _{I}^{i}\partial _{i}~$ & $\mathbf{A}_{IJ}=x_{J}\delta
_{I}^{i}\partial _{i}$ \\ \hline
\multicolumn{1}{|l|}{Special Projective collineation (SPC)} &  & $\mathbf{P}%
_{I}=S_{I}\mathbf{H}.~$ \\ \hline
\end{tabular}
\end{center}

\section{The Lie point symmetry conditions in an affine space}

\label{The symmetry conditions using Lie symmetry methods}

We consider the system of ODEs: 
\begin{equation}
\ddot{x}^{i}+\Gamma _{jk}^{i}\dot{x}^{j}\dot{x}^{k}+\sum%
\limits_{m=0}^{n}P_{j_{1}...j_{m}}^{i}\dot{x}^{j_{1}}\ldots \dot{x}^{j_{m}}=0
\label{de.1}
\end{equation}%
where $\Gamma _{jk}^{i}$ are the connection coefficients of the space and $%
P_{j_{1}...j_{m}}^{i}(t,x^{i})$ are smooth polynomials completely symmetric
in the lower indices and derive the Lie point symmetry conditions in
geometric form using the standard approach. Equation (\ref{de.1}) is quite
general and covers most of the standard cases, autonomous and non
autonomous, and in particular equation (\ref{de.0}). Furthermore because the 
$\Gamma _{jk}^{i}$'s are not assumed to be symmetric, the results are valid
in a space with torsion. Obviously they hold in a Riemannian space provided
that the connection coefficients are given in terms of the Christofell
symbols.

The detailed calculation has been given in \cite{TsamparlisGRG2} and shall
not be repeated here. In the following we summarize these results.

The terms $\dot{x}^{j_{1}}\ldots \dot{x}^{j_{m}}$ for $m\leq 4$ give the
equations: 
\begin{align}
L_{\eta }P^{i}+2\xi ,_{t}P^{i}+\xi P^{i},_{t}+\eta ^{i},_{tt}+\eta
^{j},_{t}P_{.j}^{i}& =0 \\
L_{\eta }P_{j}^{i}+\xi ,_{t}P_{j}^{i}+\xi P_{j}^{i},_{t}+\left( \xi
,_{k}\delta _{j}^{i}+2\xi ,_{j}\delta _{k}^{i}\right) P^{k}+2\eta
^{i},_{t|j}-\xi ,_{tt}\delta _{k}^{i}+2\eta ^{k},_{t}P_{.jk}^{i}& =0 \\
L_{\eta }P_{jk}^{i}+L_{\eta }\Gamma _{jk}^{i}+\left( \xi ,_{d}\delta
_{(k}^{i}+\xi ,_{(k}\delta _{|d|}^{i}\right) P_{.j)}^{d}+\xi
P_{.kj,t}^{i}-2\xi ,_{t(j}\delta _{k)}^{i}+3\eta ^{d},_{t}P_{.dkj}^{i}& =0 \\
L_{\eta }P_{.jkd}^{i}-\xi ,_{t}P_{.jkd}^{i}+\xi ,_{e}\delta
_{(k}^{i}P_{.dj)}^{e}+\xi P_{.jkd,t}^{i}+4\eta ^{e},_{t}P_{.jkde}^{i}-\xi
_{(,j|k}\delta _{d)}^{i}& =0~
\end{align}%
and the conditions due to the terms $\dot{x}^{j_{1}}\ldots \dot{x}^{j_{m}}$
for $m>4$ are given by the following general formula: 
\begin{align}
& L_{\eta }P_{j_{1}...j_{m}}^{i}+P_{j_{1}...j_{m}~,t}^{i}\xi +\left(
2-m\right) \xi _{,t}P_{j_{1}...j_{m}}^{i}+  \notag \\
& +\xi _{,r}\left( 2-\left( m-1\right) \right) P_{j_{1}...j_{m-1}}^{i}\delta
_{j_{m}}^{r}+\left( m+1\right) P_{j_{1}...j_{m+1}}^{i}\eta
_{,t}^{j_{m+1}}+\xi _{,j}P_{j_{1}...j_{m-1}}^{j}\delta _{j_{m}}^{i}=0.
\end{align}

We note the appearance of the term $L_{\eta }\Gamma _{jk}^{i}$ in these
expressions.

Eqn (\ref{de.0}) is obtained for $m=0,$ $~P^{i}=F^{i}$ \ in which case the
Lie symmetry conditions read: 
\begin{align}
L_{\eta }P^{i}+2\xi ,_{t}P^{i}+\xi P^{i},_{t}+\eta ^{i},_{tt}& =0
\label{de.13} \\
\left( \xi ,_{k}\delta _{j}^{i}+2\xi ,_{j}\delta _{k}^{i}\right) P^{k}+2\eta
^{i},_{t|j}-\xi ,_{tt}\delta _{k}^{i}& =0  \label{de.14} \\
L_{\eta }\Gamma _{jk}^{i}-2\xi ,_{t(j}\delta _{k)}^{i}& =0  \label{de.15} \\
\xi _{(,j|k}\delta _{d)}^{i}& =0.  \label{de.16}
\end{align}%
If $F^{i}=0$ we obtain the Lie symmetry conditions for the geodesic
equations (see \cite{TsamparlisGRG2}) .

\section{The autonomous dynamical system moving in a Riemannian space}

\label{The conservative system}

We `solve' the Lie symmetry conditions (\ref{de.13}) - (\ref{de.16}) for an
autonomous dynamical system in the sense that we express them in terms of
the collineations of the metric.

Equation (\ref{de.16}) means that $\xi _{,j}$ is a gradient KV of $g_{ij}.$
This implies that the metric $g_{ij}$ is decomposable. Equation (\ref{de.15}%
) means that $\eta ^{i}$ is a projective collineation of the metric with
projective function $\xi _{,t}.$ The remaining two equations are the
constraint conditions, which relate the components $\xi ,n^{i}$ of the Lie
symmetry vector with the vector $F^{i}(x^{j})$. Equation (\ref{de.13}) gives 
\begin{equation}
\left( L_{\eta }g^{ij}\right) F_{j}+g^{ij}L_{\eta }F_{j}+2\xi
_{,t}g^{ij}F_{j}+\eta _{,tt}^{i}=0.  \label{de.21a}
\end{equation}%
This equation is an additional restriction for $\eta ^{i}$ because it
relates it directly to the metric symmetries. Finally equation (\ref{de.14})
gives%
\begin{equation}
-\delta _{j}^{i}\xi _{,tt}+\left( \xi _{,j}\delta _{k}^{i}+2\delta
_{j}^{i}\xi _{,k}\right) F^{k}+2\eta _{,tj}^{i}+2\Gamma _{jk}^{i}\eta
_{,t}^{k}=0.  \label{de.21d}
\end{equation}

We conclude that the Lie symmetry equations are equations (\ref{de.21a}) ,(%
\ref{de.21d}) where $\xi (t,x)$ is a gradient KV of the metric $g_{ij}$ and $%
\eta ^{i}\left( t,x\right) $ is a special Projective collineation of the
metric $g_{ij}$ with projective function $\xi _{,t}$. We state the results
in theorem \ref{The general conservative system} \cite{TsamparlisGRG2}.

\begin{theorem}
\label{The general conservative system} The Lie point symmetries of the
system of equations of motion of an autonomous system under the action of
the force $F^{j}(x^{i})$ in a general Riemannian space with metric $g_{ij},$
namely%
\begin{equation}
\ddot{x}^{i}+\Gamma _{jk}^{i}\dot{x}^{j}\dot{x}^{k}=F^{i}  \label{PP.01}
\end{equation}%
are given in terms of the generators $Y^{i}$ of the special projective
algebra of the metric $g_{ij}.$
\end{theorem}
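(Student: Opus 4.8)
The plan is to start from the four Lie symmetry conditions (\ref{de.13})--(\ref{de.16}) already derived in the excerpt (equivalently, the pair of constraint equations (\ref{de.21a}) and (\ref{de.21d}) together with the geometric readings of (\ref{de.15}) and (\ref{de.16})), and to decode each of them into a statement about collineations of $g_{ij}$. The generator is $X=\xi(t,x)\partial_{t}+\eta^{i}(t,x)\partial_{i}$, and the goal is to show that the spatial part $\eta^{i}$ is forced to be a \emph{special} projective collineation of $g_{ij}$, while the constraints (\ref{de.21a}), (\ref{de.21d}) merely select, through $F^{i}$, which such collineations survive.

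First I would extract the purely geometric content of (\ref{de.16}). Reading $|k$ as covariant differentiation and noting that the symmetrisation runs over $j$ and $d$ only, the equation reads $\xi_{,j;k}\delta^{i}_{d}+\xi_{,d;k}\delta^{i}_{j}=0$. Contracting the free index $i$ with $d$ gives $(n+1)\,\xi_{,j;k}=0$, hence $\xi_{,j;k}=0$. Thus the one-form $\xi_{,j}$ has vanishing symmetrised covariant derivative and, being an exact gradient, is a gradient Killing vector of $g_{ij}$; its mere existence forces the metric to be decomposable, as recalled in Section~\ref{Collineations of Riemannian spaces}.

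Next I would read (\ref{de.15}). Comparing $\mathcal{L}_{\eta}\Gamma^{i}_{jk}=2\xi_{,t(j}\delta^{i}_{k)}$ with the definition of a projective collineation $\mathcal{L}_{X}\Gamma^{i}_{jk}=2\phi_{(,j}\delta^{i}_{k)}$, I identify $\eta^{i}$ as a projective collineation of $g_{ij}$ whose projective function is $\phi=\xi_{,t}$. The decisive step is to upgrade this to a special projective collineation. Because the system (\ref{PP.01}) is autonomous, the connection coefficients are independent of $t$, so $\partial_{t}$ commutes with the spatial covariant derivative; differentiating $\xi_{,j;k}=0$ with respect to $t$ then yields $(\xi_{,t})_{;jk}=0$, that is $\phi_{;ij}=0$. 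By the definition recalled in Section~\ref{Collineations of Riemannian spaces}, this is exactly the condition that makes $\eta^{i}$ a special projective collineation. This is the conceptual heart of the argument, and the place I expect the main difficulty, since it requires handling the coupling between the temporal variable $t$ and the spatial covariant structure correctly and verifying that no proper (non-special) projective piece can appear.

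Finally I would feed this information back into the constraint equations. Equation (\ref{de.21d}), together with $\xi_{,j;k}=0$, fixes the $t$-dependence: integrating it shows that $\xi$ is built from the gradient-KV potential with $t$-dependent coefficients, so that $\eta^{i}$ decomposes into a $t$-independent special projective collineation plus terms carrying the explicit $t$-dependence through the homothetic and Killing vectors. Equation (\ref{de.21a}) is then the single remaining constraint relating the chosen collineation directly to the force $F^{i}$, thereby selecting the admissible subalgebra. Collecting these pieces, every component of $X$ is expressed through the generators $Y^{i}$ of the special projective algebra of $g_{ij}$, which is the assertion of the theorem.
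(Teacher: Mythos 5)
Your proposal is correct and takes essentially the same route as the paper: condition (\ref{de.16}) is decoded as $\xi_{,j;k}=0$ (so $\xi_{,j}$ is a gradient KV and the metric is decomposable), condition (\ref{de.15}) as the statement that $\eta^{i}$ is a projective collineation with projective function $\xi_{,t}$, and (\ref{de.21a}), (\ref{de.21d}) as the constraints through which $F^{i}$ selects the admissible subalgebra. Your explicit argument that the projective function is \emph{special}---differentiating $\xi_{,j;k}=0$ with respect to $t$ and using the $t$-independence of the connection to obtain $(\xi_{,t})_{;jk}=0$---is a detail the paper asserts without spelling out, and it correctly closes the step from projective to special projective collineation.
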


\bigskip If the force $F^{i}$ is derivable from a potential $V(x^{i})$ so
that the equations of motion follow from the standard Lagrangian 
\begin{equation}
L\left( x^{j},\dot{x}^{j}\right) =\frac{1}{2}g_{ij}\dot{x}^{i}\dot{x}%
^{j}-V\left( x^{j}\right)  \label{NPC.02}
\end{equation}%
with Hamiltonian%
\begin{equation}
E=\frac{1}{2}g_{ij}\dot{x}^{i}\dot{x}^{j}+V\left( x^{j}\right) ~
\label{NPC.3}
\end{equation}%
the Noether conditions, are 
\begin{eqnarray}
V_{,k}\eta ^{k}+V\xi _{,t} &=&-f_{,t} \\
\eta _{,t}^{i}g_{ij}-\xi _{,j}V &=&f_{,j} \\
L_{\eta }g_{ij} &=&2\left( \frac{1}{2}\xi _{,t}\right) g_{ij} \\
\xi _{,k} &=&0.
\end{eqnarray}

Last equation implies $\xi =\xi \left( t\right) $ and reduces the system as
follows%
\begin{eqnarray}
L_{\eta }g_{ij} &=&2\left( \frac{1}{2}\xi _{,t}\right) g_{ij}  \label{NPC.4}
\\
V_{,k}\eta ^{k}+V\xi _{,t} &=&-f_{,t}  \label{PP.01.6} \\
\eta _{i,t} &=&f_{,i}.  \label{PP.01.7}
\end{eqnarray}

Equation (\ref{NPC.4}) implies that $\eta ^{i}$ is a conformal Killing
vector of the metric provided $\xi _{,t}\neq 0.$ Because $g_{ij}$\ is
independent of $t$\ and $\xi =\xi \left( t\right) $\ the $\eta ^{i}$\ must
be is a HV of the metric. This means that $\eta ^{i}\left( t,x\right)
=T\left( t\right) Y^{i}\left( x^{j}\right) $\ where $Y^{i}$\ is a HV. If $%
\xi _{,t}=0$ then $\eta ^{i}$ is a Killing vector of the metric. Equations (%
\ref{PP.01.6}), (\ref{PP.01.7}) are the constraint conditions, which the
Noether symmetry and the potential must satisfy for the former to be
admitted. These lead to the following theorem \cite{TsamparlisGRG2}.

\begin{theorem}
\label{The Noether Theorem}The Noether point symmetries of the Lagrangian (%
\ref{NPC.02}) are generated from the homothetic algebra of the metric $%
g_{ij} $.
\end{theorem}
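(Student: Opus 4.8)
The plan is to work entirely from the reduced Noether system (\ref{NPC.4})--(\ref{PP.01.7}), in which the constraint $\xi_{,k}=0$ has already been used so that $\xi=\xi(t)$. The whole argument hinges on extracting the geometric content of the single tensorial equation (\ref{NPC.4}); the two remaining relations (\ref{PP.01.6}) and (\ref{PP.01.7}) then play the role of constraint equations, selecting which homothetic generators actually survive and fixing the gauge function $f$ together with the time function $\xi$.

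First I would fix $t$ and read (\ref{NPC.4}) as a statement about the spatial vector field $x\mapsto\eta^{i}(t,x)$. By the definition recalled in Section \ref{Collineations of Riemannian spaces}, this says that $\eta(t,\cdot)$ is a conformal Killing vector of $g_{ij}$ with conformal factor $\psi=\tfrac12\xi_{,t}$. The crucial observation is that, since $g_{ij}$ is independent of $t$ while $\xi$ depends on $t$ alone, $\psi$ is constant over the space (it may depend on $t$ but not on $x^{k}$). By the very definitions of that section, a CKV whose conformal factor is spatially constant is a homothetic vector when $\psi\neq0$ and a Killing vector when $\psi=0$. Hence, for every $t$, the field $\eta(t,\cdot)$ lies in the finite-dimensional homothetic algebra of $g_{ij}$, which is the content the theorem asserts.

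Next I would exploit this finite-dimensionality to separate the time dependence. Expanding $\eta(t,\cdot)$ in a fixed basis of the homothetic algebra with $t$-dependent coefficients, equation (\ref{NPC.4}) ties the coefficient of the (essentially unique) proper homothetic vector to $\tfrac12\xi_{,t}$ up to the homothety constant, which produces the factorised form $\eta^{i}(t,x)=T(t)\,Y^{i}(x^{j})$ with $Y^{i}$ a homothetic vector, exactly as claimed in the discussion preceding the theorem. Substituting this ansatz into (\ref{PP.01.7}) yields $T_{,t}\,Y_{i}=f_{,i}$; antisymmetrising the partial derivatives forces $Y_{[i,j]}=0$ whenever $T_{,t}\neq0$, so the surviving generator is a gradient homothetic vector and $f$ is recovered by a quadrature, while (\ref{PP.01.6}) reduces to the compatibility condition relating the potential $V$ to $Y^{i}$ and $T(t)$, i.e. the condition deciding whether a given homothetic vector is admitted as a Noether symmetry.

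The hard part will be the passage from ``$\eta(t,\cdot)$ lies in the homothetic algebra for each $t$'' to the clean factorisation $\eta^{i}=T(t)Y^{i}(x)$ together with a consistent resolution of the two constraints: one must control the possible Killing-vector contributions to $\eta$, whose coefficients may also depend on $t$, and show via (\ref{PP.01.7}) and (\ref{PP.01.6}) that they either separate in the same way or are pinned down by the constraints, while simultaneously verifying that the induced conditions on $V$ are mutually consistent. Once this is carried out, the theorem follows, since by construction every Noether generator has been expressed through the generators of the homothetic algebra of $g_{ij}$.
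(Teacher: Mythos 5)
Your proposal follows essentially the same route as the paper: equation (\ref{NPC.4}) is read as saying that $\eta(t,\cdot)$ is a CKV whose conformal factor $\tfrac{1}{2}\xi_{,t}$ is spatially constant, hence a HV (or KV when $\xi_{,t}=0$) for each fixed $t$, giving $\eta^{i}=T(t)Y^{i}(x)$, with (\ref{PP.01.6})--(\ref{PP.01.7}) then acting as the constraints that select the admissible homothetic generators and fix $f$ and $V$. The extra details you supply---expanding $\eta(t,\cdot)$ in a basis of the finite-dimensional homothetic algebra and antisymmetrising (\ref{PP.01.7}) to force the gradient condition $Y_{[i,j]}=0$ when $T_{,t}\neq 0$---are exactly the steps the paper compresses into its pre-theorem paragraph and its Cases I and II, deferring the full argument to \cite{TsamparlisGRG2}, so your proof is correct and takes the paper's approach.
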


More specifically, concerning the Noether symmetries, we have the following

All autonomous systems admit the Noether symmetry $\partial _{t}~$whose
Noether integral is the Hamiltonian~$E$ (\ref{NPC.3}). For the rest of the
Noether symmetries we consider the following cases

\textbf{Case I }\ Noether point symmetries generated by the homothetic
algebra.

The Noether symmetry vector and the Noether function $G\left( t,x^{k}\right) 
$ are%
\begin{equation}
\mathbf{X}=2\psi _{Y}t\partial _{t}+Y^{i}\partial _{i}~,~G\left(
t,x^{k}\right) =pt  \label{NPC.03}
\end{equation}%
where, $\psi _{Y}$ is the homothetic factor of $Y^{i}~$($\psi _{Y}=0$ for a
KV\ and $1$ for the HV) and $p$ is a constant, provided the potential
satisfies the condition%
\begin{equation}
\mathcal{L}_{Y}V+2\psi _{Y}V+p=0.  \label{NPC.04}
\end{equation}

\textbf{Case II} \ Noether point symmetries generated by the gradient
homothetic Lie algebra, i.e., both KVs and the HV are gradient. \ 

In this case the Noether symmetry vector and the Noether function are%
\begin{equation}
\mathbf{X}=2\psi _{Y}\int T\left( t\right) dt\partial _{t}+T\left( t\right)
H^{i}\partial _{i}~~,~G\left( t,x^{k}\right) =T_{,t}H\left( x^{k}\right)
~+p\int Tdt  \label{NPC.05}
\end{equation}%
where, $H^{i}$ is the gradient HV or a gradient KV, the function $T(t)$ is
computed from the relation~$~T_{,tt}=mT~\ $where $~m$ is a constant and the
potential satisfies the condition 
\begin{equation}
\mathcal{L}_{H}V+2\psi _{Y}V+mH+p=0.  \label{NPC.06}
\end{equation}

Concerning the Noether integrals we have the following result (not including
the Hamiltonian)

\begin{corollary}
\label{The Noether Integrals}The Noether integrals of Case I and Case II are
respectively 
\begin{equation}
I_{C_{I}}=2\psi _{Y}tE-g_{ij}Y^{i}\dot{x}^{j}+pt  \label{NPC.07}
\end{equation}%
\begin{equation}
I_{C_{II}}=2\psi _{Y}\int T\left( t\right) dt~E-g_{ij}H^{,i}\dot{x}%
^{j}+T_{,t}H+p\int Tdt  \label{NPC.08}
\end{equation}%
where, $E$ is the Hamiltonian (\ref{NPC.3}).
\end{corollary}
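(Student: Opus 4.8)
The plan is to read the integrals straight off Noether's first theorem, so that the statement reduces to substituting the symmetry data of the two cases into a single universal formula. I would first record that for the standard Lagrangian (\ref{NPC.02}) the conjugate momentum is $\partial L/\partial \dot{x}^{i}=g_{ij}\dot{x}^{j}$, whence the Legendre combination coincides with the Hamiltonian,
\begin{equation}
\dot{x}^{i}\frac{\partial L}{\partial \dot{x}^{i}}-L=\frac{1}{2}g_{ij}\dot{x}^{i}\dot{x}^{j}+V=E,
\end{equation}
with $E$ the expression in (\ref{NPC.3}).

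For any Noether symmetry $\mathbf{X}=\xi \partial _{t}+\eta ^{i}\partial _{i}$ with Noether function $G$ satisfying (\ref{L2p.3}), the associated first integral is
\begin{equation}
I=\xi \left( \dot{x}^{i}\frac{\partial L}{\partial \dot{x}^{i}}-L\right) -\eta ^{i}\frac{\partial L}{\partial \dot{x}^{i}}+G=\xi E-g_{ij}\eta ^{i}\dot{x}^{j}+G.
\end{equation}
The remaining work is pure substitution. In Case I I insert $\xi =2\psi _{Y}t$, $\eta ^{i}=Y^{i}$ and $G=pt$ from (\ref{NPC.03}), obtaining at once (\ref{NPC.07}). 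In Case II I insert $\xi =2\psi _{Y}\int T\,dt$, $\eta ^{i}=T(t)H^{i}$ and $G=T_{,t}H+p\int T\,dt$ from (\ref{NPC.05}); using that $H^{i}$ is a \emph{gradient} homothetic (or Killing) vector, so that $H^{i}=H^{,i}=g^{ij}H_{,j}$, the velocity-dependent term is expressed through $H^{,i}$ and the substitution reproduces (\ref{NPC.08}).

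Should a self-contained derivation be preferred to an appeal to Noether's theorem, I would instead verify $dI/dt=0$ along solutions. This is the only place where a genuine computation enters: one differentiates $I$, substitutes the Euler--Lagrange equations $\frac{d}{dt}(g_{ij}\dot{x}^{j})=\frac{1}{2}g_{kl,i}\dot{x}^{k}\dot{x}^{l}-V_{,i}$, and then cancels the surviving terms using the homothety equation (\ref{NPC.4}), i.e. $\mathcal{L}_{Y}g_{ij}=2\psi _{Y}g_{ij}$, together with the potential constraints (\ref{NPC.04}) in Case I and (\ref{NPC.06}) in Case II. I expect no conceptual obstacle here, since the vectors have already been shown to be genuine Noether symmetries. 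The only points requiring care are fixing the overall sign convention in the Noether integral formula so that it matches (\ref{NPC.07})--(\ref{NPC.08}), and correctly invoking the gradient property in Case II so that the velocity-dependent term is written through $H^{,i}$ rather than through an undifferentiated $g_{ij}H^{i}\dot{x}^{j}$.
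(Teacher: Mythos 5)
Your route --- the canonical Noether-integral formula $I=\xi\left(\dot{x}^{i}\frac{\partial L}{\partial\dot{x}^{i}}-L\right)-\eta^{i}\frac{\partial L}{\partial\dot{x}^{i}}+G=\xi E-g_{ij}\eta^{i}\dot{x}^{j}+G$ followed by substitution of the symmetry data --- is exactly the derivation the paper leaves implicit (the corollary is stated without proof), and for Case I the substitution indeed yields (\ref{NPC.07}). The problem is your Case II claim. Substituting $\eta^{i}=T(t)H^{,i}$ from (\ref{NPC.05}) produces the velocity term $-T(t)\,g_{ij}H^{,i}\dot{x}^{j}$, whereas (\ref{NPC.08}) carries $-g_{ij}H^{,i}\dot{x}^{j}$ with no factor $T(t)$; the gradient property $H^{i}=H^{,i}$ that you invoke only rewrites the vector and cannot absorb that factor. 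So the substitution does \emph{not} reproduce (\ref{NPC.08}) as printed, and asserting that it does is a genuine gap: as written, your argument either proves a different formula or proves nothing about the stated one.

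The resolution is that your expression, not the printed one, is the conserved quantity; (\ref{NPC.08}) has dropped a factor of $T$. Had you executed the fallback check you sketched, you would have caught this. Differentiating $I=2\psi_{Y}\int T\,dt\,E-T g_{ij}H^{,i}\dot{x}^{j}+T_{,t}H+p\int T\,dt$ along solutions and using $\dot{E}=0$, the gradient-homothety relation $H_{;ij}=\psi_{Y}g_{ij}$, the equations of motion and $T_{,tt}=mT$, the two $T_{,t}g_{ij}H^{,i}\dot{x}^{j}$ terms cancel, the velocity-squared terms cancel against $2\psi_{Y}TE$, and one is left with $\frac{dI}{dt}=T\left(\mathcal{L}_{H}V+2\psi_{Y}V+mH+p\right)=0$ by (\ref{NPC.06}). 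The same computation applied to the literal (\ref{NPC.08}) leaves the uncancelled remainder $\psi_{Y}\left(T-1\right)g_{ij}\dot{x}^{i}\dot{x}^{j}+T_{,t}H_{,i}\dot{x}^{i}+\left(T-1\right)\left(2\psi_{Y}V+mH+p\right)$, which vanishes only when $T$ is constant --- i.e.\ only in the situation already covered by Case I. So your method is the right one and your formula is the correct integral, but a complete proof must carry out the Case II substitution explicitly and flag the missing factor $T$ in (\ref{NPC.08}) as an error in the statement, rather than silently claiming agreement with it.
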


We remark that theorems \ref{The general conservative system} and \ref{The
Noether Theorem} do not apply to generalized symmetries\cite{Sarlet,Kalotas}.

In a number of recent papers \ \cite%
{FerozeMahomedQadir,Feroze2011,Hussain2010,Feroze2010}, the authors study
the relation between the Noether symmetries of the geodesic Lagrangian 
\begin{equation}
L=\frac{1}{2}g_{ij}\dot{x}^{i}\dot{x}^{j}  \label{L1}
\end{equation}%
where $\dot{x}^{a}=\frac{dx^{a}}{ds}$ ($s$ is an affine parameter along the
geodesics) with the spacetime symmetries. They also make a conjecture
concerning the relation between the Noether symmetries and the conformal
algebra of spacetime and concentrate especially on conformally flat
spacetimes. In \cite{Feroze2010} it is also claimed that the author has
found new conserved quantities for spaces of different curvatures, which
seem to be of nonnoetherian character. Obviously due to the above results
(see also \cite{TsamparlisGRG}) the conjecture/results in these papers
should be revised and the word `conformal' should be replaced with the word
`homothetic'.

It would be of interest to examine if the above close relation of the Lie
and the Noether symmetries of the second order ODEs of the form (\ref{de.1})
with the collineations of the metric is possible to be carried over to some
types of second order partial differential equations (PDEs). Although to
this question it is not possible to give a global answer, due to the
complexity of the study and the great variety of PDEs, it is still possible
to give an answer of some generality which concerns many interesting and
important cases. We will do this in the remaining sections.

\section{The case of the second-order PDE's}

\label{The case of the second order PDE's}

In the attempt to establish a general relation between the Lie symmetries of
a second order PDE and the collineations of a Riemannian space we derive the
Lie symmetry conditions for a second order PDE of the form 
\begin{equation}
A^{ij}u_{ij}-F(x^{i},u,u_{i})=0  \label{GPE.0}
\end{equation}%
and we consider the coefficients $A^{ij}(x,u)$ to be the components of a
metric in a Riemannian space. According to the standard approach \cite%
{Bluman book ODES,Olver Book,Stephani book ODES} the symmetry condition is 
\begin{equation}
X^{[2]}(H)=\lambda H  \label{GPE.10}
\end{equation}%
where $\lambda (x^{i},u,u_{i})$ is a function to be determined. $X^{[2]}$ is
the second prolongation of the Lie symmetry vector 
\begin{equation}
X=\xi ^{i}\left( x^{i},u\right) \frac{\partial }{\partial x^{i}}+\eta \left(
x^{i},u\right) \frac{\partial }{\partial u}  \label{GPE.10.1}
\end{equation}%
given by the expression: 
\begin{equation}
X^{^{[2]}}=\xi ^{i}\frac{\partial }{\partial x^{i}}+\eta \frac{\partial }{%
\partial u}+\eta _{i}^{(1)}\frac{\partial }{\partial u_{i}}+\eta
_{ij}^{\left( 2\right) }\frac{\partial }{\partial u_{ij}}  \label{GPE.10a}
\end{equation}%
where%
\begin{eqnarray*}
\eta _{i}^{(1)} &=&\frac{D\eta }{Dx^{i}}-u_{j}\frac{D\xi ^{j}}{Dx^{i}}=\eta
_{,i}+u_{i}\eta _{u}-\xi _{,i}^{j}u_{j}-u_{i}u_{j}\xi _{,u}^{j} \\
\eta _{ij}^{\left( 2\right) } &=&\frac{D\eta _{i}^{\left( 1\right) }}{Dx^{j}}%
-u_{jk}\frac{D\xi ^{k}}{Dx^{j}}=\eta _{ij}+(\eta _{ui}u_{j}+\eta
_{uj}u_{i})-\xi _{,ij}^{k}u_{k}+\eta _{uu}u_{i}u_{j}-(\xi
_{.,ui}^{k}u_{j}+\xi _{.,uj}^{k}u_{i})u_{k} \\
&&+\eta _{u}u_{ij}-(\xi _{.,i}^{k}u_{jk}+\xi _{.,j}^{a}u_{ik})-\left(
u_{ij}u_{k}+u_{i}u_{jk}+u_{ik}u_{j}\right) \xi _{.,u}^{k}-u_{i}u_{j}u_{k}\xi
_{uu}^{k}.
\end{eqnarray*}%
The introduction of the function $\lambda (x^{i},u,u_{i})$ in (\ref{GPE.10})
causes the variables $x^{i},u,u_{i}$ to be independent\footnote{%
See Ibragimov \cite{Ibragimov book 1} p. 115}.

The symmetry condition $X^{[2]}(H)=\lambda H$ when applied to (\ref{GPE.0})
gives:%
\begin{equation}
A^{ij}\eta _{ij}^{\left( 2\right) }+\left( XA^{ij}\right)
u_{ij}-X^{[1]}(F)=\lambda (A^{ij}u_{ij}-F)  \label{GPE.13}
\end{equation}%
from which follows:%
\begin{align}
0& =A^{ij}\eta _{ij}-\eta _{,i}g^{ij}F_{,u_{j}}-X(F)+\lambda F  \notag \\
& +2A^{ij}\eta _{ui}u_{j}-A^{ij}\xi _{,ij}^{a}u_{a}-u_{i}\eta
_{u}g^{ij}F_{,u_{j}}+\xi _{,i}^{k}u_{k}g^{ij}F_{,u_{j}}  \notag \\
& +A^{ij}\eta _{uu}u_{i}u_{j}-2A^{ij}\xi _{.,uj}^{k}u_{i}u_{k}+u_{i}u_{k}\xi
_{,u}^{k}g^{ij}F_{,u_{j}}  \notag \\
& +A^{ij}\eta _{u}u_{ij}-2A^{ij}\xi _{.,i}^{k}u_{jk}+(\xi
^{k}A_{,k}^{ij}+\eta A_{,u}^{ij})u_{ij}-\lambda A^{ij}u_{ij}  \notag \\
& -A^{ij}\left( u_{ij}u_{a}+u_{i}u_{ja}+u_{ia}u_{j}\right) \xi
_{.,u}^{a}-u_{i}u_{j}u_{a}A^{ij}\xi _{uu}^{a}.  \label{Po.0}
\end{align}

We note that we cannot deduce the symmetry conditions before we select a
specific form for the function $F.$ However we may determine the conditions
which are due to the second derivative of $u$ because in these terms no $F$
terms are involved. This observation significantly reduces the complexity of
the remaining symmetry condition. Following this observation we have the
condition:%
\begin{align*}
0 &
=A^{ij}\eta_{u}u_{ij}-A^{ij}(\xi_{.,i}^{k}u_{ja}+\xi_{.,j}^{k}u_{ik})+(%
\xi^{k}A_{,k}^{ij}+\eta A_{,u}^{ij})u_{ij}-\lambda A^{ij}u_{ij} \\
& -A^{ij}\left( u_{ij}u_{a}+u_{i}u_{ja}+u_{ia}u_{j}\right)
\xi_{.,u}^{a}-u_{i}u_{j}u_{a}A^{ij}\xi_{uu}^{a}
\end{align*}
from which follow the equations:%
\begin{align*}
A^{ij}\left( u_{ij}u_{k}+u_{jk}u_{i}+u_{ik}u_{j}\right) \xi_{.,u}^{k} & =0 \\
A^{ij}\eta_{u}u_{ij}-A^{ij}(\xi_{.,i}^{k}u_{jk}+\xi_{.,j}^{k}u_{ik})+(\xi
^{k}A_{,k}^{ij}+\eta A_{,u}^{ij})u_{ij}-\lambda A^{ij}u_{ij} & =0 \\
A^{ij}\xi_{uu}^{a} & =0.
\end{align*}
The first equation is written:%
\begin{equation}
A^{ij}\xi_{.,u}^{k}+A^{kj}\xi_{.,u}^{i}+A^{ik}\xi_{.,u}^{j}=0\Leftrightarrow
A^{(ij}\xi_{.,u}^{k)}=0.  \label{Po.1}
\end{equation}
The second equation gives:%
\begin{equation}
A^{ij}\eta_{u}+\eta
A_{,u}^{ij}+\xi^{k}A_{,k}^{ij}-A^{kj}\xi_{.,k}^{i}-A^{ik}\xi_{.,k}^{j}-%
\lambda A^{ij}=0.  \label{Po.2}
\end{equation}
and the last equation gives:%
\begin{equation}
A^{ij}\xi_{uu}^{k}=0.  \label{Po.2a}
\end{equation}

It is straightforward to show\footnote{%
We give a simple proof for $n=2$ in Appendix A. A detailed and more general
proof can be found in \cite{BlumanPaper}.} that condition (\ref{Po.1})
implies 
\begin{equation}
\xi _{.,u}^{k}=0
\end{equation}%
which is a well known result.

From the analysis so far we obtain the first result:

\begin{proposition}
\label{Coefficient xi second} For all second-order PDEs\ of the form $%
A^{ij}u_{ij}-F(x^{i},u,u_{i})=0,$ for which at least one of the $A^{ij}$ is $%
\neq 0$ the $\xi _{.,u}^{i}=0$ or $\xi ^{i}=\xi ^{i}(x^{j}).$ Furthermore
condition (\ref{Po.2a}) is identically satisfied.
\end{proposition}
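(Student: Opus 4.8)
The plan is to reduce the whole proposition to the single algebraic implication that (\ref{Po.1}) forces $\xi^{i}_{,u}=0$; the assertion about (\ref{Po.2a}) is then immediate, since $\xi^{k}_{uu}=\partial_{u}(\xi^{k}_{,u})$ vanishes the moment $\xi^{k}_{,u}=0$. I would treat (\ref{Po.1}), written out as $A^{ij}\xi^{k}_{,u}+A^{kj}\xi^{i}_{,u}+A^{ik}\xi^{j}_{,u}=0$, as a pointwise identity relating the symmetric array $A^{ij}$ to the vector $\zeta^{k}:=\xi^{k}_{,u}$, and argue purely linear-algebraically at each point, using only $A^{ij}=A^{ji}$.

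The core step is a contraction argument that avoids inverting $A^{ij}$. Suppose, for contradiction, that $\zeta^{k}\neq 0$ at some point. Since $\zeta$ is then a nonzero vector, I can choose a covector $w_{k}$ with $w_{k}\zeta^{k}=1$. Contracting (\ref{Po.1}) on the index $k$ with $w_{k}$ and using the symmetry of $A^{ij}$ yields $A^{ij}=-(B^{i}\zeta^{j}+B^{j}\zeta^{i})$, where $B^{i}:=A^{ik}w_{k}$. Substituting this expression for $A^{ij}$ back into the definition of $B^{i}$ gives $2B^{i}=-(B^{k}w_{k})\zeta^{i}$, and contracting once more with $w_{i}$ forces $B^{k}w_{k}=0$, hence $B^{i}=0$ and therefore $A^{ij}\equiv 0$. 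This contradicts the hypothesis that at least one $A^{ij}\neq 0$, so $\zeta^{k}=\xi^{k}_{,u}=0$, i.e. $\xi^{i}=\xi^{i}(x^{j})$.

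Finally, with $\xi^{i}_{,u}=0$ established, $\xi^{k}_{uu}=\partial_{u}(\xi^{k}_{,u})=0$ identically, so $A^{ij}\xi^{k}_{uu}=0$ holds automatically and (\ref{Po.2a}) is verified. The main obstacle is carrying out the argument under the weak hypothesis that merely one $A^{ij}$ is nonzero rather than full nondegeneracy: the one-line route of contracting (\ref{Po.1}) with the inverse metric gives $(n+2)\xi^{k}_{,u}=0$ but presupposes that $A^{ij}$ is invertible, which I want to avoid. The $w$-contraction above sidesteps this, invoking only symmetry and $A\not\equiv 0$. For completeness, in low dimension one can instead dispatch the claim by inspection: for $n=2$ the four independent components of (\ref{Po.1}) read $A^{11}\zeta^{1}=0$, $A^{11}\zeta^{2}+2A^{12}\zeta^{1}=0$, $A^{22}\zeta^{1}+2A^{12}\zeta^{2}=0$ and $A^{22}\zeta^{2}=0$, and a short case analysis on which entry of $A^{ij}$ is nonzero immediately yields $\zeta^{1}=\zeta^{2}=0$. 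This case-by-case method is what I would relegate to the appendix, presenting the contraction argument as the general proof.
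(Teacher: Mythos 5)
Your proof is correct, and it takes a genuinely different route from the paper's. The paper (Appendix A) splits into two cases: when $A^{ij}$ is invertible it contracts (\ref{Po.1}) with the inverse matrix $B_{ij}$ to get $(n+2)\xi^{k}_{,u}=0$; when $A^{ij}$ is degenerate it resorts to an explicit component-by-component case analysis, carried out only for $n=2$, and defers the general degenerate case to the reference \cite{BlumanPaper}. Your contraction with a covector $w_{k}$ dual to $\zeta^{k}=\xi^{k}_{,u}$ handles both cases at once: the chain $A^{ij}=-(B^{i}\zeta^{j}+B^{j}\zeta^{i})$, then $2B^{i}=-(B^{k}w_{k})\zeta^{i}$, then $3B^{k}w_{k}=0$, forces $A^{ij}\equiv 0$ at any point where $\zeta\neq 0$, which is exactly the contrapositive needed, and it uses only the symmetry of $A^{ij}$ and the hypothesis that $A$ is not the zero matrix at each point. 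What your argument buys is a self-contained, dimension-independent proof that makes the appeal to \cite{BlumanPaper} unnecessary and also repairs a slightly garbled step in the paper's degenerate case (the paper's deduction that $A^{22}=0$ and $\det A^{ij}=0$ imply $A_{ij}=0$ does not actually follow when $A^{11}\neq 0$, though the conclusion it wants is still reachable, as your cleaner $n=2$ inspection shows: the four components $A^{11}\zeta^{1}=0$, $A^{11}\zeta^{2}+2A^{12}\zeta^{1}=0$, $A^{22}\zeta^{1}+2A^{12}\zeta^{2}=0$, $A^{22}\zeta^{2}=0$ kill $\zeta$ whichever entry of $A$ is nonzero, with no determinant condition needed). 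What the paper's version buys is only that the invertible case is a one-line computation; your treatment of (\ref{Po.2a}), namely that $\xi^{k}_{uu}=\partial_{u}(\xi^{k}_{,u})=0$ once $\xi^{k}_{,u}=0$, matches the paper's implicit reasoning.
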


There remains the third symmetry condition (\ref{Po.2}). We consider the
following cases:\newline
$i,j\neq0$\newline
We write(\ref{Po.2}) in an alternative form by considering $A^{ij}$ to be a
metric as follows:%
\begin{equation}
L_{\xi^{i}\partial_{i}}A^{ij}=\lambda A^{ij}-(\eta A^{ij})_{,u}
\label{GPE.32}
\end{equation}
from which follows:

\begin{proposition}
\label{Coefficient xi third} For all second-order PDEs \ of the form $%
A^{ij}u_{ij}-F(x^{i},u,u_{i})=0,$ for which $A^{ij}{}_{,u}=0$ i.e. $%
A^{ij}=A^{ij}(x^{i}),$ the vector $\xi ^{i}\partial _{i}$ is a CKV of the
metric $A^{ij}$ with conformal factor ($\lambda -\eta _{u})(x).$
\end{proposition}

Assuming\footnote{%
The index $t$ refers to the coordinate $x^{0}$ whenever it is involved.} $%
A^{tt}=A^{ti}=0$ we have\newline
- for $i=j=0$ nothing\newline
- for $i,j\neq 0$ gives (\ref{GPE.32}) and\newline
- for $i=0,j\neq 0$ becomes: 
\begin{align}
A^{tj}\eta _{u}+\eta A_{,u}^{tj}+\xi ^{k}A_{,k}^{tj}-A^{kj}\xi
_{.,k}^{t}-A^{tk}\xi _{.,k}^{j}-\lambda A^{tj}& =0\Rightarrow  \notag \\
A^{kj}\xi _{.,k}^{t}& =0  \label{GPE.30b}
\end{align}%
which leads to the following general result.

\begin{proposition}
\label{Coefficient xi fourth} For all second-order PDEs \ of the form $%
A^{ij}u_{ij}-F(x^{i},u,u_{i})=0,$ for which $A^{kj}$ is nondegenerate the $%
\xi _{.,k}^{t}=0$, that is, $\xi ^{t}=\xi ^{t}(t)$ .
\end{proposition}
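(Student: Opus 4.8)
The plan is to read Proposition~\ref{Coefficient xi fourth} as the direct consequence of the single equation (\ref{GPE.30b}), which was isolated from the general symmetry condition (\ref{Po.2}) under the splitting assumption $A^{tt}=A^{ti}=0$. First I would fix attention on the case $i=0$, $j\neq 0$ of (\ref{Po.2}), substitute the assumptions $A^{tt}=A^{tj}=0$ (so that $A^{tj}=0$ and all terms $A^{tj}\eta_u$, $\eta A^{tj}_{,u}$, $\xi^k A^{tj}_{,k}$, $A^{tk}\xi^j_{.,k}$, $\lambda A^{tj}$ that carry an explicit factor of $A^{tj}$ or $A^{tk}$ with $k=t$ drop out identically), and thereby reduce the whole condition to the clean relation
\begin{equation}
A^{kj}\xi^t_{.,k}=0
\label{PLAN.1}
\end{equation}
for every spatial index $j\neq 0$. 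This is exactly the step already recorded in (\ref{GPE.30b}), so the derivation of (\ref{PLAN.1}) is essentially bookkeeping once the vanishing coefficients are tracked carefully.

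The second step is to pass from (\ref{PLAN.1}) to the vanishing of $\xi^t_{.,k}$. Here I would invoke the nondegeneracy hypothesis on the spatial block $A^{kj}$: by assumption the matrix $(A^{kj})_{k,j\neq 0}$ is invertible, so the linear system $\sum_k A^{kj}\xi^t_{.,k}=0$ (one equation for each $j\neq 0$) has only the trivial solution $\xi^t_{.,k}=0$ for all spatial $k$. Contracting (\ref{PLAN.1}) with the inverse spatial metric makes this immediate. Combining this with the earlier result $\xi^t_{.,u}=0$ from Proposition~\ref{Coefficient xi second}, one concludes that $\xi^t$ can depend on neither the spatial coordinates $x^k$ nor on $u$, hence $\xi^t=\xi^t(t)$, which is the assertion.

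The main obstacle I anticipate is not the algebra but pinning down precisely what ``nondegenerate'' means here and over which index range it is needed. The splitting $A^{tt}=A^{ti}=0$ says the full matrix $A^{ij}$ is block-degenerate (the time row and column vanish), so nondegeneracy can only refer to the spatial block $A^{kj}$ with $k,j\neq 0$; I would make this explicit to avoid the apparent tension between ``$A^{kj}$ nondegenerate'' and ``$A^{tt}=A^{ti}=0$.'' A secondary subtlety is the index $k$ in (\ref{PLAN.1}): since $A^{tk}$ and $A^{tt}$ vanish, the only surviving contributions come from spatial $k$, so the contraction with the inverse should be taken over spatial indices only. Once this is clarified the proof is a one-line inversion, and the geometric content—that $\xi^t$ is a function of $t$ alone—follows with no further computation.
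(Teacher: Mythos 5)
Your proposal is correct and follows essentially the same route as the paper: specialize the condition (\ref{Po.2}) to $i=t$, $j\neq 0$ under the splitting $A^{tt}=A^{ti}=0$ to obtain (\ref{GPE.30b}), then invert the nondegenerate spatial block $A^{kj}$ to conclude $\xi^{t}_{.,k}=0$, and combine with $\xi^{t}_{,u}=0$ from Proposition~\ref{Coefficient xi second} to get $\xi^{t}=\xi^{t}(t)$. Your explicit remark that nondegeneracy must refer to the spatial block (since the full matrix is degenerate by the splitting assumption) is a clarification the paper leaves implicit, but it does not change the argument.
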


Using that $\xi _{,u}^{i}=0$ when at least one of the $A_{ij}\neq 0,$ the
symmetry condition (\ref{Po.0}) is simplified as follows 
\begin{align}
0& =A^{ij}\eta _{ij}-\eta _{,i}A^{ij}F_{,u_{j}}-X(F)+\lambda F  \notag \\
& +2A^{ij}\eta _{ui}u_{j}-A^{ij}\xi _{,ij}^{a}u_{a}-u_{i}\eta
_{u}A^{ij}F_{,u_{j}}+\xi _{,i}^{k}u_{k}A^{ij}F_{,u_{j}}  \notag \\
& +A^{ij}\eta _{uu}u_{i}u_{j}+A^{ij}\eta _{u}u_{ij}-2A^{ij}\xi
_{.,i}^{k}u_{jk}  \label{GPE.30} \\
& +(\xi ^{k}A_{,k}^{ij}+\eta A_{,u}^{ij})u_{ij}-\lambda A^{ij}u_{ij}  \notag
\end{align}%
which together with the condition (\ref{GPE.32}) are the complete set of
conditions\emph{\ for all }second-order PDEs \ of the form $%
A^{ij}u_{ij}-F(x^{i},u,u_{i})=0,$ for which at least one of the $A_{ij}\neq
0 $. This class of PDEs is quite general. This fact makes the above result
very useful..

In order to continue we need to consider special forms for the function $%
F(x,u,u_{i}).$

\section{The Lie symmetry conditions for a linear function $F(x,u,u_{i})$}

\label{The Lie symmetry conditions for a linear function}

We consider the function $F(x,u,u_{i})$ to be linear in $u_{i},$ that is to
be of the form%
\begin{equation}
F(x,u,u_{i})=B^{k}(x,u)u_{k}+f(x,u)  \label{GPE.30a}
\end{equation}%
where $B^{k}(x,u),f(x,u)$ are arbitrary functions of their arguments. In
this case the PDE\ is of the form%
\begin{equation}
A^{ij}u_{ij}-B^{k}(x,u)u_{k}-f(x,u)=0.  \label{GPE.30.1}
\end{equation}

The Lie symmetries of this type of PDEs have been studied previously by
Ibragimov\cite{Ibragimov book 1}. Assuming that at least one of the $%
A_{ij}\neq 0$ the Lie symmetry conditions are (\ref{GPE.30}) and (\ref%
{GPE.32}).

Replacing $F(x,u,u_{1})$ in (\ref{GPE.30}) we find\footnote{%
We ignore the terms with $u_{ij}$ because we have already use them to obtain
condition (\ref{GPE.32}). Indeed it is easy to see that these terms give $%
A^{ij}\eta_{u}-2A^{ij}\xi_{.,i}^{k}+\xi^{k}A_{,k}^{ij}+\eta
A_{,u}^{ij}-\lambda A^{ij}=0$ which is precisely condition (\ref{GPE.32}).}:%
\begin{align}
0 & =A^{ij}\eta_{ij}-\eta_{,i}g^{ij}B_{j}-\xi^{k}f_{,k}-\eta f_{,u}+\lambda f
\notag \\
&
+2A^{ij}\eta_{ui}u_{j}-A^{ij}\xi_{,ij}^{a}u_{a}-u_{i}\eta_{u}g^{ij}B_{j}+%
\xi_{,i}^{k}u_{k}g^{ij}B_{j}+\lambda B^{k}u_{k}-\eta
B_{,u}^{k}u_{k}-\xi^{l}B_{,l}^{k}u_{k}  \notag \\
& +A^{ik}\eta_{uu}u_{i}u_{k}  \label{GPE.33} \\
& +A^{ij}\eta_{u}u_{ij}-2A^{kj}\xi_{.,k}^{i}u_{ji}+(\xi^{k}A_{,k}^{ij}+\eta
A_{,u}^{ij})u_{ij}-\lambda A^{ij}u_{ij}
\end{align}
from which follow the equations:%
\begin{align}
A^{ij}\eta_{ij}-\eta_{,i}B^{i}-\xi^{k}f_{,k}-\eta f_{,u}+\lambda f & =0
\label{GPE.34} \\
-2A^{ik}\eta_{ui}+A^{ij}\xi_{,ij}^{k}+\eta_{u}B^{k}-\xi_{,i}^{k}B^{i}+\xi
^{i}B_{,i}^{k}-\lambda B^{k}+\eta B_{,u}^{k} & =0  \label{GPE.35} \\
A^{ik}\eta_{uu} & =0.  \label{GPE.36}
\end{align}

Equation (\ref{GPE.36}) gives (because at least one $A^{ik}\neq0!):$ 
\begin{equation}
\eta=a(x^{i})u+b(x^{i}).  \label{GPE.37}
\end{equation}
Equation (\ref{GPE.35}) gives 
\begin{equation*}
-2A^{ik}a_{,i}+aB^{k}+auB_{,u}^{k}+A^{ij}\xi _{,ij}^{k}-\xi
_{,i}^{k}B^{i}+\xi ^{i}B_{,i}^{k}-\lambda B^{k}+bB_{,u}^{k}=0.
\end{equation*}%
We summarize the above results as follows.

\label{propPDE.1} The Lie symmetry conditions for the second order PDEs of
the form 
\begin{equation}
A^{ij}u_{ij}-B^{k}(x,u)u_{k}-f(x,u)=0  \label{GPE.40}
\end{equation}%
where at least one of the $A_{ij}\neq 0$ are:%
\begin{equation}
A^{ij}(a_{ij}u+b_{ij})-(a_{,i}u+b_{,i})B^{i}-\xi
^{k}f_{,k}-auf_{,u}-bf_{,u}+\lambda f=0  \label{GPE.42}
\end{equation}%
\begin{equation}
A^{ij}\xi _{,ij}^{k}-2A^{ik}a_{,i}+aB^{k}+auB_{,u}^{k}-\xi
_{,i}^{k}B^{i}+\xi ^{i}B_{,i}^{k}-\lambda B^{k}+bB_{,u}^{k}=0  \label{GPE.43}
\end{equation}%
\begin{equation}
L_{\xi ^{i}\partial _{i}}A^{ij}=(\lambda -a)A^{ij}-\eta A^{ij}{}_{,u}
\label{GPE.44}
\end{equation}%
\begin{align}
\eta & =a(x^{i})u+b(x^{i})  \label{GPE.45} \\
\xi _{,u}^{k}& =0\Leftrightarrow \xi ^{k}(x^{i}).  \label{GPE.46}
\end{align}%
We note that for all second order PDEs of the form $%
A^{ij}u_{ij}-B^{k}(x,u)u_{k}-f(x,u)=0$ for which $A^{ij}{}_{,u}=0$ i.e. $%
A^{ij}(x^{i}),$ the $\xi ^{i}(x^{j})$ is a CKV of the metric $A^{ij}.$ Also
in this case $\lambda (x^{i}).$ This result establishes the relation between
the Lie symmetries of this type of PDEs with the collineations of the metric
defined by the coefficients $A_{ij}.$

Furthermore in case the coordinates are $t,x^{i}$ (where $i=1,...,n$) $%
A^{tt}=A^{tx^{i}}=0$ and $A^{ij}$ is a nondegenerate metric we have that%
\begin{equation}
\xi _{,i}^{t}=0\Leftrightarrow \xi ^{t}(t).  \label{GPE.46a}
\end{equation}

These symmetry relations coincide with those given in \cite{Ibragimov book 1}%
.

Finally note that equation (\ref{GPE.43}) can be written:%
\begin{equation}
A^{ij}\xi _{,ij}^{k}-2A^{ik}a_{,i}+[\xi ,B]^{k}+(a-\lambda
)B^{k}+(au+b)B_{,u}^{k}=0.  \label{GPE.47}
\end{equation}%
Having derived the Lie symmetry conditions for the type of PDEs of the form $%
A^{ij}u_{ij}-B^{k}(x,u)u_{k}-f(x,u)=0$ we continue with the computation of
the Lie symmetries of some important PDEs of this form.

Before we proceed, we state two Lemmas which will be used in the discussion
of the examples.

\begin{lemma}
\label{LemmaPDE.1}

a. In flat space (in which $\Gamma _{jk}^{i}=0)$ the following identity
holds:%
\begin{equation}
L_{\xi }\Gamma _{ij}^{k}=\xi _{,ij}^{k}.
\end{equation}%
b. For a general metric $g_{ij}$ satisfying the condition $L_{\xi
^{i}\partial _{i}}g_{ij}=-(\lambda -a)g_{ij}$ the following relation holds: 
\begin{equation}
g^{jk}L_{\xi }\Gamma _{.jk}^{i}=g^{jk}\xi _{,~jk}^{i}+\Gamma _{~,l}^{i}\xi
^{l}-\xi _{~,l}^{i}\Gamma ^{l}+(a-\lambda )\Gamma _{~}^{i}.
\end{equation}%
Proof

Using the formula%
\begin{equation*}
L_{\xi }\Gamma _{.jk}^{i}=\Gamma _{.jk,l}^{i}\xi ^{l}+\xi _{,~jk}^{i}-\xi
_{~,l}^{i}\Gamma _{.jk}^{l}+\xi _{~,j}^{s}\Gamma _{.sk}^{i}+\xi
_{~,k}^{s}\Gamma _{.sj}^{i}
\end{equation*}

we have:%
\begin{align*}
g^{jk}L_{\xi }\Gamma _{.jk}^{i}& =\Gamma _{,l}^{i}\xi ^{l}+g^{jk}\xi
_{~,,jk}^{i}-g_{~~~~,l}^{jk}\xi ^{l}\Gamma _{.jk}^{i}-\xi _{~,l}^{i}\Gamma
^{l}+2g^{jk}\xi _{~,j}^{s}\Gamma _{.sk}^{i} \\
& =\Gamma _{,l}^{i}\xi ^{l}+g^{jk}\xi _{~,~jk}^{i}-\xi _{~,l}^{i}\Gamma ^{l}
\\
& -[g^{jl}\xi _{,l}^{k}+g^{kl}\xi _{,l}^{j}-(\lambda -a))g^{jk}]\Gamma
_{.jk}^{i}+2g^{jk}\xi _{~,j}^{s}\Gamma _{~sk}^{i} \\
& =\Gamma _{,l}^{i}\xi ^{l}+g^{jk}\xi _{~,~jk}^{i}-\xi _{~,l}^{i}\Gamma
^{l}+2(g^{jl}\xi _{~,j}^{k}\Gamma _{~kl}^{i}-g^{lj}\xi _{~,j}^{k}\Gamma
_{~kl}^{i})-(\lambda -a)\Gamma ^{i} \\
& =\Gamma _{,l}^{i}\xi ^{l}-\xi _{~,l}^{i}\Gamma ^{l}+g^{jk}\xi
_{~,~jk}^{i}+(a-\lambda )\Gamma ^{i}
\end{align*}
\end{lemma}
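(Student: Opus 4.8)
The plan is to base both parts on the single standard expression for the Lie derivative of the connection, namely
\[
L_{\xi }\Gamma _{jk}^{i}=\Gamma _{jk,l}^{i}\xi ^{l}+\xi _{,jk}^{i}-\xi _{,l}^{i}\Gamma _{jk}^{l}+\xi _{,j}^{s}\Gamma _{sk}^{i}+\xi _{,k}^{s}\Gamma _{sj}^{i},
\]
and then either specialise it (part a) or contract it (part b). For part (a) I would simply note that if $\Gamma _{jk}^{i}\equiv 0$ in the chosen (flat, Cartesian) frame, then automatically $\Gamma _{jk,l}^{i}\equiv 0$ as well, so every term above carrying a factor of $\Gamma $ drops out and what survives is exactly $L_{\xi }\Gamma _{jk}^{i}=\xi _{,jk}^{i}$. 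This part is essentially immediate and requires no further input.

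For part (b) the idea is to contract the same formula with $g^{jk}$ and reorganise everything in terms of the contracted object $\Gamma ^{i}:=g^{jk}\Gamma _{jk}^{i}$. Two routine manipulations do most of the bookkeeping. First, the leading term $g^{jk}\Gamma _{jk,l}^{i}\xi ^{l}$ is rewritten via the Leibniz rule $\Gamma _{,l}^{i}=g_{,l}^{jk}\Gamma _{jk}^{i}+g^{jk}\Gamma _{jk,l}^{i}$ as $\Gamma _{,l}^{i}\xi ^{l}-g_{,l}^{jk}\xi ^{l}\Gamma _{jk}^{i}$, which isolates the desired $\Gamma _{,l}^{i}\xi ^{l}$ together with an unwanted metric-derivative term. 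Second, the two symmetric terms $\xi _{,j}^{s}\Gamma _{sk}^{i}+\xi _{,k}^{s}\Gamma _{sj}^{i}$ collapse, using the symmetry of $g^{jk}$, into $2g^{jk}\xi _{,j}^{s}\Gamma _{sk}^{i}$; and $-\xi _{,l}^{i}g^{jk}\Gamma _{jk}^{l}$ is already $-\xi _{,l}^{i}\Gamma ^{l}$.

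The decisive step is to eliminate the leftover factor $g_{,l}^{jk}\xi ^{l}$. I would raise the indices of the hypothesis $L_{\xi }g_{ij}=-(\lambda -a)g_{ij}$ to get $L_{\xi }g^{jk}=(\lambda -a)g^{jk}$, then expand the Lie derivative explicitly as $L_{\xi }g^{jk}=\xi ^{l}g_{,l}^{jk}-g^{lk}\xi _{,l}^{j}-g^{jl}\xi _{,l}^{k}$, and solve for $g_{,l}^{jk}\xi ^{l}$. Substituting the resulting expression back, the conformal-factor piece contributes $(a-\lambda )\Gamma ^{i}$, while the two $\xi _{,l}\Gamma $-type pieces cancel precisely against the $2g^{jk}\xi _{,j}^{s}\Gamma _{sk}^{i}$ term after relabelling dummy indices and using the symmetry $\Gamma _{jk}^{i}=\Gamma _{kj}^{i}$. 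What remains is exactly $g^{jk}\xi _{,jk}^{i}+\Gamma _{,l}^{i}\xi ^{l}-\xi _{,l}^{i}\Gamma ^{l}+(a-\lambda )\Gamma ^{i}$, as claimed.

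The only genuinely delicate point is this final cancellation: one must match $g^{lk}\xi _{,l}^{j}\Gamma _{jk}^{i}$ and $g^{jl}\xi _{,l}^{k}\Gamma _{jk}^{i}$ against $2g^{jk}\xi _{,j}^{s}\Gamma _{sk}^{i}$ by careful dummy-index renaming, and one must track the sign of the conformal factor correctly when passing from $g_{ij}$ to $g^{jk}$, since a sign slip there would corrupt the concluding $(a-\lambda )\Gamma ^{i}$ term. Everything else is direct substitution.
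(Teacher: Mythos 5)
Your proposal is correct and follows essentially the same route as the paper's own proof: contract the standard formula for $L_{\xi }\Gamma _{.jk}^{i}$ with $g^{jk}$, use the Leibniz rule to isolate $\Gamma _{,l}^{i}\xi ^{l}$, eliminate $g_{,l}^{jk}\xi ^{l}$ via the raised-index conformal condition $L_{\xi }g^{jk}=(\lambda -a)g^{jk}$, and cancel the remaining $\xi _{,l}\Gamma $-type terms by dummy-index relabelling. In fact your sign bookkeeping on the conformal term is cleaner than the paper's intermediate line, which carries a sign typo ($-(\lambda -a)g^{jk}$ where $+(\lambda -a)g^{jk}$ is meant) before arriving at the same correct final expression.
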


\begin{lemma}
\label{LemmaPDE.2}

Assume that the vector $\xi ^{i}$ is a CKV of the metric $g_{ij}$ with
conformal factor $-(\lambda -a)$ i.e. $L_{\xi ^{i}\partial
_{i}}g_{ij}=-(\lambda -a)g_{ij}.$ Then the following statement is true:%
\begin{equation}
g^{jk}L_{\xi }\Gamma _{.jk}^{i}=\frac{2-n}{2}(a-\lambda )^{,i}
\end{equation}%
where $n=g^{jk}g_{kj}$ the dimension of the space.

Proof

Using the identity:%
\begin{equation}
L_{\xi }\Gamma _{.jk}^{i}=\frac{1}{2}g^{ir}\left[ \nabla _{k}L_{\xi
}g_{jr}+\nabla _{j}L_{\xi }g_{kr}-\nabla _{r}L_{\xi }g_{kj}\right]
\end{equation}%
and replacing $L_{\xi }g_{ij}=(a-\lambda )g_{ij}$ we find:%
\begin{align*}
L_{\xi }\Gamma _{.jk}^{i}& =\frac{1}{2}g^{ir}\left[ (a-\lambda
)_{,k}g_{jr}+(a-\lambda )_{,j}g_{kr}-(a-\lambda )_{,r}g_{kj}\right] \\
& =\frac{1}{2}\left[ (a-\lambda )_{,k}\delta _{j}^{i}+(a-\lambda
)_{,j}\delta _{k}^{i}-g^{ir}(a-\lambda )_{,r}g_{kj}\right] .
\end{align*}%
Contracting with $g^{jk}$ we obtain the required result.
\end{lemma}

\section{Applications}

\label{Applications}

\subsection{The wave equation for an inhomogeneous medium}

In order to show how the above considerations are applied in practice we
consider the wave equation for an inhomogeneous medium in flat 2d Newtonian
space:%
\begin{equation}
c^{2}(x^{1})u_{11}-u_{22}=0.  \label{WE.0}
\end{equation}
In this case we have:%
\begin{equation*}
A_{11}=c^{-2}(x^{1}),\text{ }A_{22}=-1,\text{ }A_{12}=0;B^{i}=0;\,f=0.
\end{equation*}

The symmetry conditions (\ref{GPE.42}) - (\ref{GPE.46}) become:%
\begin{equation}
A^{ij}(a_{ij}u+b_{ij})=0  \label{WE.1}
\end{equation}%
\begin{equation}
A^{ij}\xi _{,ij}^{k}-2A^{ik}a_{i}=0  \label{WE.2}
\end{equation}%
\begin{equation}
L_{\xi ^{i}\partial _{i}}A^{ij}=(\lambda -a)A^{ij}  \label{WE.3}
\end{equation}%
\begin{align}
\eta & =a(x^{i})u+b(x^{i})  \label{WE.4} \\
& \xi ^{k}(x).  \label{WE.5}
\end{align}%
The vector $\xi ^{i}$ is a CKV of the metric $A_{ij}=$diag$%
(c^{-2}(x^{1}),-1) $ with conformal factor $-(\lambda -a).$ This is a%
\footnote{%
Because $\det A_{ij}=-c^{2}(x^{1})\neq 0$ the inverse of $A^{ij}$ exists.}
nondegenerate 2-d metric which is conformally flat, therefore if we find the
conformal factor we will have the solution $\xi ^{i}$ and the function $%
a(x^{i}).$

We take now the metric to be the $A^{ij.}$ Then according to Lemma \ref%
{LemmaPDE.1} we have (where $\Gamma _{.jk}^{i}$ are the $\Gamma ^{\prime }$
s of the metric $A_{ij}):$ 
\begin{align}
A^{jk}L_{\xi }\Gamma _{.jk}^{i}& =\Gamma _{,l}^{i}\xi ^{l}-\xi
_{.,l}^{i}\Gamma ^{l}+A^{jk}\xi _{.,jk}^{i}-(\lambda -a)\Gamma
_{.}^{i}\Rightarrow  \notag \\
A^{jk}\xi _{.,jk}^{i}& =A^{jk}L_{\xi }\Gamma _{.jk}^{i}+\xi _{.,l}^{i}\Gamma
^{l}-\Gamma _{,l}^{i}\xi ^{l}+(\lambda -a)\Gamma _{.}^{i}  \label{WE.5b}
\end{align}%
Then the Lie symmetry condition (\ref{WE.2}) $A^{ij}\xi
_{,ij}^{k}=2A^{ik}a_{i}$ is written: 
\begin{equation}
A^{ij}(L_{\xi }\Gamma _{ij}^{k}-2\delta _{j}^{k}a_{,i})=\Gamma _{,l}^{i}\xi
^{l}-\xi _{.,l}^{i}\Gamma ^{l}-(\lambda -a)\Gamma ^{i}.  \label{WE.7}
\end{equation}%
For the metric $A_{ij}=diag(-c^{2}(x^{1}),-1)$ \ we compute $\Gamma
^{i}=\Gamma _{,l}^{i}=0$ hence the last equation becomes:%
\begin{equation}
A^{ij}(L_{\xi }\Gamma _{ij}^{k}-2\delta _{j}^{k}a_{,i})=0.  \label{WE.15}
\end{equation}%
Because the metric $A_{ij}$ is nondegenerate this implies%
\begin{equation}
L_{\xi }\Gamma _{ij}^{k}=2\delta _{(j}^{k}a_{,i)}  \label{WE.16}
\end{equation}%
which means that $\xi ^{i}$ is a projective vector of the metric $A_{ij}$
with projection function $a.$ But $\xi ^{i}$ is also a CKV of the same
metric with conformal factor $a-\lambda $ therefore $\xi ^{i}$ must be a HV
of the metric $A_{ij}.$ This implies 
\begin{equation*}
a=c_{1}\text{ a constant}
\end{equation*}%
and 
\begin{equation*}
-\lambda +c_{1}=c_{2}\Rightarrow \lambda =c_{1}-c_{2}
\end{equation*}%
where $c_{2}$ is the homothetic factor. From the remaining condition (\ref%
{WE.1}) we have $A^{ij}b_{,ij}=0$ that is, the function $b$ is a solution of
the original wave equation (\ref{WE.0}). We conclude that the Lie symmetry
vector is (see also \cite{Bluman book ODES} p. 182): 
\begin{equation}
X=\xi ^{i}\partial _{i}+(c_{1}u+b)\partial _{u}  \label{WE.22}
\end{equation}%
where $\xi ^{i}$ is a HV (not necessarily proper) of the metric $A_{ij}$ and 
$b(x^{i})$ is a solution of the wave equation. The vector $\xi ^{i}\partial
_{i}$ \ is the sum $a_{i}KV^{i}+a_{HV}HV$ where $a_{i},a_{HV}$ are constants
and $KV^{i},HV$ are any of the KVs and the HV (if it exists) of the metric $%
A_{ij}.$

In the following section we consider a new example which is the heat
conduction equation with a flux in an $n-$dimensional Riemannian space.

\subsection{The heat conduction equation with a flux in a Riemannian space}

The heat equation with a flux in an $n-$dimensional Riemannian space with
metric $g_{ij}$ is:%
\begin{equation}
H\left( u\right) =q\left( t,x^{i},u\right)  \label{HEF.01}
\end{equation}%
where 
\begin{equation*}
H\left( u\right) :=g^{ij}u_{ij}-\Gamma ^{i}u_{i}-u_{t}.
\end{equation*}%
The term $q$ \ indicates that the system exchanges energy with the
environment. In this case the Lie point symmetry vector is:%
\begin{equation*}
\mathbf{X}=\xi ^{i}\left( x^{j},u\right) \partial _{i}+\eta \left(
x^{j},u\right) \partial _{u}
\end{equation*}%
where $a=t,i$. \ For this equation we have:

\begin{equation*}
A^{tt}=0,\text{ }A^{ti}=0,\text{ }A^{ij}=g^{ij},B^{i}=%
\Gamma^{i}(t,x^{i}),B^{t}=1,f(x,u)=q\left( t,x^{k},u\right).
\end{equation*}

For this PDE the symmetry conditions (\ref{GPE.42}) - (\ref{GPE.46a}) become
:%
\begin{equation}
\eta =a(t,x^{i})u+b(t,x^{i})  \label{HEF.01.1}
\end{equation}%
\begin{equation}
\ \xi ^{t}=\xi ^{t}(t)  \label{HEF.01.3}
\end{equation}%
\begin{equation}
g^{ij}(a_{ij}u+b_{ij})-(a_{,i}u+b_{,i})\Gamma ^{i}-\left(
a_{,t}u+b_{,t}\right) +\lambda q=\xi ^{t}q_{,t}+\xi ^{k}q_{,k}+\eta q_{,u}
\label{HEF.01.4}
\end{equation}%
\begin{equation}
g^{ij}\xi _{,ij}^{k}-2g^{ik}a_{,i}+a\Gamma ^{k}-\xi _{,i}^{k}\Gamma ^{i}+\xi
^{i}\Gamma _{,i}^{k}-\lambda \Gamma ^{k}=0  \label{HEF.01.6}
\end{equation}%
\begin{equation}
L_{\xi ^{i}\partial _{i}}g_{ij}=(a-\lambda )g_{ij}.  \label{HEF.01.7}
\end{equation}

The solution of the symmetry conditions is summarized in Theorem \ref{The
Lie of the heat equation with flux}. The proof of the theorem is given in
Appendix B.

\begin{theorem}
\label{The Lie of the heat equation with flux}The Lie point symmetries of
the heat equation with flux i.e. 
\begin{equation}
g^{ij}u_{ij}-\Gamma ^{i}u_{i}-u_{t}=q\left( t,x,u\right)  \label{HEF.18}
\end{equation}%
in a $n$-dimensional Riemannian space with metric $g_{ij}$ are constructed
form the homothetic algebra of the metric as follows:

a. $Y^{i}$ is a nongradient HV/KV.\newline
The Lie point symmetry is 
\begin{equation}
X=\left( 2c_{2}\psi t+c_{1}\right) \partial _{t}+c_{2}Y^{i}\partial
_{i}+\left( a\left( t\right) u+b\left( t,x\right) \right) \partial _{u}
\label{HEF.19}
\end{equation}%
where $a(t),b\left( t,x^{k}\right) ,q\left( t,x^{k},u\right) $ must satisfy
the constraint equation%
\begin{equation}
-a_{t}u+H\left( b\right) -\left( au+b\right) q_{,u}+aq-\left( 2\psi
c_{2}qt+c_{1}q\right) _{t}-c_{2}q_{,i}Y^{i}=0.  \label{HEF.20}
\end{equation}

b. $Y^{i}=S^{,i}$ is a gradient HV/KV.\newline
The Lie point symmetry is 
\begin{equation}
X=\left( 2\psi \int Tdt+c_{1}\right) \partial _{t}+TS^{,i}\partial
_{i}+\left( \left( -\frac{1}{2}T_{,t}S+F\left( t\right) \right) u+b\left(
t,x\right) \right) \partial _{u}  \label{HEF.21}
\end{equation}%
where $F(t),T(t),b\left( t,x^{k}\right) ,q\left( t,x^{k},u\right) $ must
satisfy the constraint equation%
\begin{align}
0& =\left( -\frac{1}{2}T_{,t}\psi +\frac{1}{2}T_{,tt}S-F_{,t}\right)
u+H\left( b\right) +  \notag \\
& -\left( \left( -\frac{1}{2}T_{,t}S+F\right) u+b\right) q_{,u}+\left( -%
\frac{1}{2}T_{,t}S+F\right) q-\left( 2\psi q\int Tdt+c_{1}q\right)
_{t}-Tq_{,i}S^{,i}.  \label{HEF.22}
\end{align}
\end{theorem}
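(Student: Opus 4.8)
The plan is to solve the reduced symmetry system (\ref{HEF.01.1})--(\ref{HEF.01.7}) by turning every condition into a statement about collineations of the spatial metric $g_{ij}$, exactly as in the ODE analysis that produced Theorem~\ref{The Noether Theorem}. The starting point is (\ref{HEF.01.7}), which says that the spatial vector $\xi^{i}\partial_{i}$ is a CKV of $g_{ij}$ with conformal factor $\phi:=a-\lambda$. First I would extract the $t$--component of the covariant symmetry condition (\ref{GPE.47}): since $B^{t}=1$, $B^{i}=\Gamma^{i}$, $B^{k}{}_{,u}=0$ and $\xi^{t}=\xi^{t}(t)$ by (\ref{HEF.01.3}), the terms $A^{ij}\xi^{t}{}_{,ij}$ and $A^{it}a_{,i}$ vanish, $[\xi,B]^{t}=-\xi^{t}{}_{,t}$, and one is left with $\xi^{t}{}_{,t}=a-\lambda=\phi$. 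Because the left-hand side depends only on $t$, this forces the conformal factor $\phi=\phi(t)$ to be \emph{spatially constant}, i.e. $\phi^{,k}=0$. Hence at each fixed $t$ the CKV $\xi^{i}(t,\cdot)$ is actually a homothetic vector (or a KV when $\phi=0$), so $\xi^{i}(t,x)=\sum_{A}c_{A}(t)Y^{i}_{A}(x)$ lies, for every $t$, in the fixed finite-dimensional homothetic algebra of the space.

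Next I would process the spatial components of (\ref{GPE.47}). Using Lemma~\ref{LemmaPDE.1}(b) to replace $g^{ij}\xi^{k}{}_{,ij}$ by $g^{ij}L_{\xi}\Gamma^{k}_{ij}$, the $\Gamma$--terms cancel in pairs (here $-\xi^{k}{}_{,i}B^{i}$ contributes the $B^{t}=1$ piece $-\xi^{k}{}_{,t}$), leaving $g^{ij}L_{\xi}\Gamma^{k}_{ij}=2a^{,k}+\xi^{k}{}_{,t}$. Evaluating the left side by Lemma~\ref{LemmaPDE.2} as $\tfrac{2-n}{2}\phi^{,k}$ and using the spatial constancy $\phi^{,k}=0$, one obtains the single clean relation $\xi^{i}{}_{,t}=-2a^{,i}$. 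This is the link that ties the time-evolution of the spatial generator to the spatial gradient of the coefficient $a$ in $\eta=au+b$.

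The split into cases (a) and (b) then comes out of this relation. Lowering the index, $g_{ij}\xi^{j}{}_{,t}=-2a_{,i}$ is an \emph{exact} one-form, so $\sum_{A}\dot c_{A}(t)\,Y_{Ai}$ must be closed for each $t$; since the $\dot c_{A}$ are functions of $t$ alone, closedness forces $\dot c_{A}=0$ on every \emph{nongradient} generator (whose dual one-form has nonvanishing curl), leaving time dependence only on the gradient members of the homothetic algebra. This is precisely the Case~I/Case~II dichotomy of Theorem~\ref{The Noether Theorem}. In case~(a) one then has $\xi^{i}=c_{2}Y^{i}$ with $Y^{i}$ a nongradient HV/KV and $c_{2}$ constant, whence $a^{,i}=0$, i.e. $a=a(t)$, and $\phi=2c_{2}\psi$ integrates to $\xi^{t}=2c_{2}\psi t+c_{1}$, reproducing (\ref{HEF.19}). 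In case~(b), $\xi^{i}=T(t)S^{,i}$ with $S^{,i}$ a gradient HV/KV; integrating $a^{,i}=-\tfrac12 T_{,t}S^{,i}$ gives $a=-\tfrac12 T_{,t}S+F(t)$, while $\phi=2T\psi$ integrates to $\xi^{t}=2\psi\int T\,dt+c_{1}$, reproducing (\ref{HEF.21}).

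Finally I would feed the now-determined $\xi^{t},\xi^{i}$ and $\eta=au+b$ into the one remaining condition (\ref{HEF.01.4}). Writing its left-hand side with the heat operator $H$ (so that $g^{ij}a_{ij}-\Gamma^{i}a_{,i}-a_{,t}=H(a)$, and likewise for $b$) and substituting $\lambda=a-\phi=a-\xi^{t}{}_{,t}$, equation (\ref{HEF.01.4}) collapses to the stated constraints (\ref{HEF.20}) in case~(a) and (\ref{HEF.22}) in case~(b); the flux $q$ and the additive term $b(t,x)$ survive only through these relations, exactly as $b$ survived as a solution of the wave equation in the previous example. I expect the main obstacle to be the case-split step: rigorously upgrading the $t$--parametrised CKV to a single homothetic/Killing generator times a function of $t$, and showing that exactness of $g_{ij}\xi^{j}{}_{,t}$ kills the time dependence of the nongradient coefficients. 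This is the genuinely geometric heart of the argument, and the point where the homothetic rather than merely conformal character of the admitted symmetries is forced.
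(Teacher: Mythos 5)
Your proposal is correct and follows essentially the same route as the paper's own proof in Appendix B: the $t$-component yielding $\xi^{t}{}_{,t}=a-\lambda$ (hence a spatially constant conformal factor, upgrading the CKV to a homothetic/Killing vector), the relation $\xi^{i}{}_{,t}=-2a^{,i}$ (the paper's (\ref{HEF.01.09}), whose sign your version fixes so as to be consistent with (\ref{HEF.13}) and (\ref{HEF.21})), the gradient/nongradient dichotomy, and substitution into (\ref{HEF.01.4}) to obtain (\ref{HEF.20}) and (\ref{HEF.22}). You are simply more explicit at the two points where the paper is terse, namely invoking Lemma \ref{LemmaPDE.1} and Lemma \ref{LemmaPDE.2} to cancel the $\Gamma$-terms, and using exactness of $g_{ij}\xi^{j}{}_{,t}$ to justify the case split that the paper merely asserts via $\xi^{k}=T(t)Y^{k}$.
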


We apply theorem \ref{The Lie of the heat equation with flux} for special
forms of the function $q\left( t,x,u\right) $.

\subsubsection{The homogeneous heat equation i.e. $q\left( t,x,u\right) =0$}

In this case we have the\ following results

\begin{theorem}
\label{The Lie of the heat equation}The Lie point symmetries of the
homogeneous heat conduction equation in an $n-$dimensional Riemannian space 
\begin{equation}
g^{ij}u_{ij}-\Gamma ^{i}u_{i}-u_{t}=0  \label{LHEC.01}
\end{equation}%
are constructed from the homothetic algebra of the metric $g_{ij}$ as
follows:

(a) If $Y^{i}$ is a nongradient HV/KV of the metric $g_{ij},$ the Lie point
symmetry is 
\begin{equation}
X=\left( 2\psi c_{1}t+c_{2}\right) \partial _{t}+c_{1}Y^{i}\partial
_{i}+\left( a_{0}u+b\left( t,x^{i}\right) \right) \partial _{u}
\label{LHEC.03}
\end{equation}%
where $c_{1},c_{2},,a_{0}$ are constants and $b\left( t,x^{i}\right) $ is a
solution of the homogeneous heat equation.

(c) If $Y^{i}=S^{,i}$ is a gradient HV/KV of the metric $g_{ij}$ the Lie
point symmetry is%
\begin{equation}
X=(c_{3}\psi t^{2}+c_{4}t+c_{5})\partial _{t}+(c_{3}t+c_{4})S^{i}\partial
_{i}+\left( -\frac{c_{3}}{2}S-\frac{c_{3}}{2}n\psi t+c_{5}\ \right)
u\partial _{u}+b\left( t,x^{i}\right) \partial _{u}  \label{LHEC.04+}
\end{equation}%
where $c_{3},c_{4},c_{5}$ are constants and $b\left( t,x^{i}\right) $ is a
solution of the homogeneous heat equation.
\end{theorem}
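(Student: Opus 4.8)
The plan is to obtain this theorem as the specialization $q\equiv 0$ of Theorem \ref{The Lie of the heat equation with flux}. The general form of the symmetry generator in each of the two cases is already fixed there by (\ref{HEF.19}) and (\ref{HEF.21}): passing to the homogeneous equation does not enlarge this form, it only trivializes the flux-dependent part of the determining equations. Hence the whole content of the statement is the reduction of the single constraint (\ref{HEF.01.4}) once every term carrying $q$ is dropped. I would therefore set $q=0$ directly in (\ref{HEF.01.4}); since $\eta=a(t,x)u+b(t,x)$, the left-hand side is then a polynomial of degree one in the independent jet variable $u$, and I would split it into its $u$-coefficient and its $u$-free part.

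This splitting produces two decoupled conditions, $H(a)=0$ and $H(b)=0$, where $H(\cdot)=g^{ij}\partial_i\partial_j-\Gamma^i\partial_i-\partial_t$ is the heat operator of (\ref{HEF.18}). The equation $H(b)=0$ says exactly that $b(t,x^i)$ is an arbitrary solution of the homogeneous heat equation; this is the expected linear-superposition part of the symmetry algebra and it appears unchanged in both (\ref{LHEC.03}) and (\ref{LHEC.04+}). It then remains only to solve $H(a)=0$ in each case and to read the surviving constants off (\ref{HEF.19}) and (\ref{HEF.21}).

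For case (a), where $Y^i$ is a nongradient HV/KV, the generator (\ref{HEF.19}) has $a=a(t)$ with no spatial dependence, so $H(a)=-a_{,t}$ and $H(a)=0$ forces $a=a_0$ constant; substituting back into (\ref{HEF.19}) and relabelling the two free constants gives (\ref{LHEC.03}). For case (b), where $Y^i=S^{,i}$ is a gradient HV/KV, I would insert the form $a=-\tfrac12 T_{,t}S+F(t)$ read off (\ref{HEF.21}) into $H(a)=0$. The one genuinely geometric step is here: tracing the gradient-homothety relation $\nabla_i\nabla_j S=\psi g_{ij}$ gives $g^{ij}\nabla_i\nabla_j S=n\psi$, and this is precisely where the dimension $n$ enters. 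The condition $H(a)=0$ then reads
\begin{equation*}
\tfrac12 T_{,tt}\,S-\tfrac12 n\psi\,T_{,t}-F_{,t}=0 .
\end{equation*}
Differentiating along $\partial_k$ and using $S_{,k}=Y_k\neq 0$ separates the $x$- and $t$-dependence and forces $T_{,tt}=0$, i.e. $T=c_3t+c_4$; the leftover ordinary differential equation then fixes $F=-\tfrac12 n\psi c_3\,t+c_5$. Feeding $T$ and $F$ back into (\ref{HEF.21}) produces the $\partial_u$-coefficient $-\tfrac{c_3}{2}S-\tfrac{c_3}{2}n\psi t+c_5$ and hence (\ref{LHEC.04+}).

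Most of this is bookkeeping once the flux theorem is available. The only nonroutine point, and the one I would watch most carefully, is the separation in case (b): correctly isolating the spatially dependent term $\tfrac12 T_{,tt}S$ so as to conclude $T_{,tt}=0$, and tracking the trace $g^{ij}\nabla_i\nabla_j S=n\psi$ so that the factor $n$ lands in the right place in the $\partial_u$-component.
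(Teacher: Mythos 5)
Your proposal is correct and follows essentially the same route as the paper, which obtains Theorem \ref{The Lie of the heat equation} exactly by setting $q=0$ in Theorem \ref{The Lie of the heat equation with flux} (proved in Appendix B) and splitting the single remaining constraint into its $u$-coefficient and $u$-free parts, just as you do. Your explicit tracing of $\nabla_i\nabla_j S=\psi g_{ij}$ to get $g^{ij}\nabla_i\nabla_j S=n\psi$ is the right bookkeeping: it produces the factor $n$ appearing in (\ref{LHEC.04+}), which the constraint (\ref{HEF.22}) of Appendix B misprints as $-\tfrac{1}{2}T_{,t}\psi$ instead of $-\tfrac{n}{2}T_{,t}\psi$.
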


In order to compare the above result with the existing results in the
literature we consider the heat equation in a Euclidian space of dimension $%
n.$ Then in Cartesian coordinates $g_{ij}=\delta _{ij},$ $\Gamma ^{i}=0$ and
the heat equation is:%
\begin{equation}
\delta ^{ij}u_{ij}-u_{t}=0.  \label{LHEC.04}
\end{equation}%
The homothetic algebra of the space consists of the $n$ gradient KVs $%
\partial _{i}$ with generating functions $x^{i},$ the $\frac{n\left(
n-1\right) }{2}$ nongradient KVs $X_{IJ}$ which are the rotations and a
gradient HV $H^{i}$ with gradient function $H=\,R\partial _{R}.$ According
to theorem \ref{The Lie of the heat equation} the Lie symmetries of the heat
equation in the Euclidian $n$ dimensional space are (we may take $\psi =1)$%
\begin{eqnarray}
X &=&\left[ c_{3}\psi t^{2}+(c_{4}+2\psi c_{1})t+c_{5}+c_{2}\right] \partial
_{t}+\left[ c_{1}Y^{i}+(c_{3}t+c_{4})S^{i}\right] \partial _{i}+
\label{LHEC.04a} \\
&&+\left[ \left( a_{0}+\frac{c_{3}}{2}S+\frac{c_{3}}{2}n\psi t-c_{5}\right)
u+b\left( t,x^{i}\right) \right] \partial _{u}.  \notag
\end{eqnarray}%
This result agrees with the results of \cite{Stephani book ODES} p. 158.

Next we consider the de Sitter spacetime (a four dimensional space of
constant curvature and Lorentzian character) whose metric is: 
\begin{equation*}
ds^{2}=\frac{\left( -d\tau ^{2}+dx^{2}+dy^{2}+dz^{2}\right) }{\left( 1+\frac{%
K}{4}\left( -\tau ^{2}+x^{2}+y^{2}+z^{2}\right) \right) ^{2}}
\end{equation*}%
It is known that the homothetic algebra of this space consists of the ten KVs%
\begin{align*}
X_{1}& =\left( -x\tau \right) \partial _{\tau }+\left( \frac{\left( -\tau
^{2}-x^{2}+y^{2}+z^{2}\right) }{2}-\frac{2}{K}\right) \partial _{x}+\left(
-yx\right) \partial _{y}+\left( -zx\right) \partial _{x} \\
X_{2}& =\left( y\tau \right) \partial _{\tau }+\left( yx\right) \partial
_{x}+\left( \frac{\left( -x^{2}-z^{2}+y^{2}+\tau ^{2}\right) }{2}+\frac{2}{K}%
\right) \partial _{y}+\left( yz\right) \partial _{x} \\
X_{3}& =\left( z\tau \right) \partial _{\tau }+\left( zx\right) \partial
_{x}+\left( zy\right) \partial _{y}+\left( \frac{\left(
-x^{2}-y^{2}+z^{2}+\tau ^{2}\right) }{2}+\frac{2}{K}\right) \partial _{x} \\
X_{4}& =\left( \frac{\left( x^{2}+y^{2}+z^{2}+\tau ^{2}\right) }{2}-\frac{2}{%
K}\right) \partial _{\tau }+\left( \tau x\right) \partial _{x}+\left( \tau
y\right) \partial _{y}+\left( \tau z\right) \partial _{x} \\
X_{5}& =x\partial _{\tau }+\tau \partial _{x}~,~X_{6}=y\partial _{\tau
}+\tau \partial _{y}~,~X_{7}=z\partial _{\tau }+\tau \partial
_{z}~,~X_{8}=y\partial _{x}-x\partial _{y} \\
X_{9}& =z\partial _{x}-x\partial _{z}~,~X_{10}=z\partial _{y}-y\partial _{z}
\end{align*}%
all of which are nongradient. According to Theorem \ref{The Lie of the heat
equation} the Lie symmetries of the heat equation in de Sitter space are%
\begin{equation*}
\partial _{t}+\sum\limits_{A=1}^{10}c_{A}X_{A}+(a_{0}u+b\left( x,u\right)
)\partial _{u}.~
\end{equation*}

From Theorem \ref{The Lie of the heat equation} we have the following
additional results.

\begin{corollary}
The one dimensional homogenous heat equation admits a maximum number of
seven Lie symmetries (modulo a solution of the heat equation).
\end{corollary}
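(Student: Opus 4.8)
The final statement is a corollary asserting that the one-dimensional homogeneous heat equation admits a maximum of seven Lie point symmetries (modulo a solution of the heat equation). The plan is to specialize Theorem~\ref{The Lie of the heat equation} to the case $n=1$, where the Riemannian space is the real line with the flat metric $g_{11}=1$, $\Gamma^{1}=0$, and the heat equation reduces to $u_{xx}-u_{t}=0$. The whole argument reduces to counting the generators produced by the homothetic algebra of this one-dimensional space together with the free constants appearing in the formulas \eqref{LHEC.03} and \eqref{LHEC.04+}.

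\textbf{Determining the homothetic algebra for $n=1$.} First I would write down the homothetic algebra of $(\mathbb{R},\,dx^{2})$. The single Killing vector is $\partial_{x}$, which is \emph{gradient} with generating function $x$; there are no rotations since $\frac{n(n-1)}{2}=0$ when $n=1$. The homothetic vector is $H=x\partial_{x}$ with homothetic factor $\psi=1$, and it too is \emph{gradient}, coming from $S=\frac{1}{2}x^{2}$ via $S^{,x}=x$. Thus the entire homothetic algebra is gradient, so the nongradient branch \eqref{LHEC.03} of the theorem contributes nothing beyond what the gradient branch already encodes, and I would apply branch (c), formula \eqref{LHEC.04+}, taking $S^{i}=S^{,x}$ to range over the gradient KV (with $\psi=0$) and the gradient HV (with $\psi=1$).

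\textbf{Counting the independent constants.} Next I would assemble the general symmetry vector by combining \eqref{LHEC.03} and \eqref{LHEC.04+} exactly as was done for Euclidean $E^{n}$ in \eqref{LHEC.04a}, specialized to $n=1$. The point symmetries split into (i) the trivial time translation $\partial_{t}$; (ii) the scaling $u\partial_{u}$ coming from the constant $a_{0}$; (iii) the space translation generated by the gradient KV, with one free constant; (iv) the Galilean-type boost and the two further generators ($c_{3}t+c_{4}$ multiplying $S^{i}$, together with the associated $t$-dependence and the $u\partial_{u}$ terms with coefficients $-\tfrac{c_{3}}{2}S-\tfrac{c_{3}}{2}n\psi t$) produced by the gradient HV via the constants $c_{3},c_{4},c_{5}$; and finally (v) the infinite-dimensional piece $b(t,x)\partial_{u}$, where $b$ is an arbitrary solution of the heat equation --- this is precisely the ``modulo a solution'' clause, and I would set it aside in the count. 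Collecting the remaining independent constants gives exactly seven finite symmetry generators, matching the classical result that the $1+1$ heat equation has a six-dimensional symmetry algebra plus the time translation, i.e. seven generators modulo the solution-superposition symmetry.

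\textbf{Main obstacle.} The calculation itself is routine once the homothetic data are fixed; the only point requiring care is the bookkeeping of the constants $c_{1},\dots,c_{5},a_{0}$ across the two branches of the theorem, making sure that no generator is double-counted when the gradient KV and gradient HV formulas overlap (for instance the $c_{5}$ term in \eqref{LHEC.04+} and the $a_{0}$ term in \eqref{LHEC.03} both produce $u\partial_{u}$, so they must be identified as a single scaling symmetry). The hard part will therefore be the careful reconciliation of the overlapping constants so that the final tally is genuinely seven and not an overcount; once that identification is made, the result follows immediately.
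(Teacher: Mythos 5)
Your overall strategy is the same as the paper's (specialize Theorem \ref{The Lie of the heat equation} to one dimension, identify the homothetic algebra --- one gradient KV and one gradient HV --- and count generators), but your bookkeeping, which you yourself flag as the crux, comes out wrong, and in a way that matters. The paper's seven symmetries are: $\partial_x$ and the Galilean boost $2t\partial_x-xu\partial_u$ (two symmetries from the gradient KV), the scaling $2t\partial_t+x\partial_x$ and the projective symmetry $4t^2\partial_t+4tx\partial_x-(x^2+2t)u\partial_u$ (two from the gradient HV), plus $\partial_t$, $u\partial_u$, and the symmetry $b(t,x)\partial_u$ counted \emph{once}. The phrase ``modulo a solution of the heat equation'' means that the infinite family $b\partial_u$ is counted as a single generator (modulo which solution $b$ is chosen); it does not mean that this symmetry is discarded. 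You discard it, and your list then contains only six generators ($\partial_t$, $u\partial_u$, the translation, the boost, and the two HV--generated symmetries), yet you claim the tally is seven. The appeal you make to the classical result exposes the error: the classical finite symmetry algebra of $u_t=u_{xx}$ is six--dimensional \emph{including} time translation, so ``six--dimensional algebra plus the time translation, i.e.\ seven generators'' double--counts $\partial_t$. The arithmetic only closes to seven if $b\partial_u$ is restored as the seventh symmetry, exactly as in the paper's proof.

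Two further points of execution. First, you attribute the Galilean boost to the gradient HV; in fact it arises from applying the gradient branch \eqref{LHEC.04+} to the gradient KV ($S=x$, $\psi=0$, the $c_3$ term giving $t\partial_x-\tfrac{x}{2}u\partial_u$), while the HV branch ($S=x^2/2$, $\psi=1$) produces the scaling and the projective symmetry; this is precisely the paper's count of ``two from the KV and two from the HV.'' Second, the paper does not assume the flat form $g_{11}=1$: it takes a general one--dimensional metric $ds^2=g^2(x)\,dx^2$, whose homothetic algebra always consists of the gradient KV $\frac{1}{g(x)}\partial_x$ and the gradient HV $\frac{1}{g(x)}\int g(x)dx\,\partial_x$; since every one--dimensional metric is flat this is equivalent to your normalization, so that difference is harmless --- the genuine gap is the off--by--one count and the misstatement of the classical result used to justify it.
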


\begin{proof}
The homothetic group of a 1-dimensional metric $ds^{2}=g^{2}\left( x\right)
dx^{2}$ consists of one gradient KV (the $\frac{1}{g\left( x\right) }%
\partial_{x})$ and one gradient HV $(\frac{1}{g\left( x\right) }\int g\left(
x\right) dx~\partial_{x})$. According to theorem \ref{The Lie of the heat
equation} from the KV we have two Lie symmetries and from the gradient HV\
another two Lie symmetries. To these we have to add the two Lie symmetries $%
X=a_{0}u\partial_{u}+b\left( t,x^{i}\right) \partial _{u}$ and the trivial
Lie symmetry $\partial_{t}$ where $b\left( t,x^{i}\right) $ is a solution of
the heat equation.
\end{proof}

\begin{corollary}
The homogeneous heat equation in a space of constant curvature of dimension $%
n$ has at most \ $\left( n+3\right) +\frac{1}{2}n\left( n-1\right) $ (modulo
a solution of the heat equation).
\end{corollary}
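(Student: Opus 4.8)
The plan is to reduce the counting to the dimension of the homothetic algebra of the space by invoking Theorem~\ref{The Lie of the heat equation}, which asserts that every Lie point symmetry of $g^{ij}u_{ij}-\Gamma ^{i}u_{i}-u_{t}=0$ is built from a homothetic vector or a Killing vector of $g_{ij}$. The theorem supplies two templates: formula (\ref{LHEC.03}) for a nongradient HV/KV, which contributes a single spatial generator $Y^{i}\partial _{i}$ together with the universal pieces $\partial _{t}$ and $u\partial _{u}$; and formula (\ref{LHEC.04+}) for a gradient HV/KV, which is richer, since the generating function $S$ activates the $t$-dependent coefficient $(c_{3}t+c_{4})S^{i}$ and the extra $u\partial _{u}$-contribution $-\tfrac{1}{2}c_{3}S$. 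Thus the total is controlled entirely by how many KVs and HVs the metric admits and by whether they are gradient.

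First I would supply the geometric input specific to constant (nonvanishing) curvature. Such a space is maximally symmetric, so it carries exactly $\tfrac{1}{2}n(n+1)$ Killing vectors, and by the result of \cite{Barnes} quoted in Section~\ref{Collineations of Riemannian spaces} its projective algebra consists only of proper PCs and KVs; in particular it admits no proper homothetic vector. Moreover the space is irreducible, so it admits no gradient Killing vector: a gradient KV $\xi _{i}=S_{,i}$ forces $S_{;ij}=0$, i.e. a covariantly constant field, which would decompose the metric. Consequently the template (\ref{LHEC.04+}) of case~(c) is vacuous and only the nongradient template (\ref{LHEC.03}) of case~(a) can occur.

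Then I would assemble the count from case~(a) alone. The independent generators are the trivial symmetry $\partial _{t}$; one generator $Y^{i}\partial _{i}$ for each of the $\tfrac{1}{2}n(n+1)$ Killing vectors; the scaling $a_{0}u\partial _{u}$; and the solution symmetry $b(t,x^{i})\partial _{u}$, counted once modulo a solution of the heat equation. Adding these gives $\tfrac{1}{2}n(n+1)+3$, which is precisely $(n+3)+\tfrac{1}{2}n(n-1)$. The de Sitter computation preceding the statement is the instance $n=4$, where the ten nongradient KVs yield $10+3=13$.

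The step I expect to be the crux is justifying that case~(c) contributes nothing, namely the simultaneous absence of a homothetic vector and of any gradient Killing vector in constant nonzero curvature. This is exactly what separates the present count from the flat case: in $E^{n}$ the gradient translations and the gradient homothety activate the $t$-dependent terms of (\ref{LHEC.04+}) and produce the additional dilation, Galilean and projective symmetries, giving a strictly larger algebra (which is why the statement must be read for \emph{nonvanishing} curvature, as in the de Sitter example). The qualifier ``at most'' then reflects that the quoted value is the bound attained in this maximal, indecomposable configuration, and it absorbs spaces with a smaller isometry group or generators that degenerate.
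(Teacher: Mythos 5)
Your proposal is correct and takes essentially the same route as the paper: both reduce the count via Theorem \ref{The Lie of the heat equation} to the fact that a space of constant (nonvanishing) curvature admits exactly $n+\frac{1}{2}n(n-1)$ Killing vectors, all nongradient, and no homothetic vector, and then add the three universal generators $\partial_t$, $a_0 u\partial_u$ and $b(t,x^i)\partial_u$ (the last counted modulo solutions). If anything, you are more careful than the paper's one-line proof, which silently assumes the absence of gradient KVs and of a proper HV, whereas you justify both (via Barnes' result and the irreducibility argument) and correctly flag that the bound fails for flat space, which is covered by the next corollary.
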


\begin{proof}
A space of constant curvature of dimension $n$ admits $n+\frac{1}{2}n\left(
n-1\right) $ nongradient KVs To these we have to add the Lie symmetries \ $%
X=c\partial _{t}+a_{0}u\partial _{u}+b\left( t,x^{i}\right) \partial _{u}.$
\end{proof}

\begin{corollary}
The heat conduction equation in a space of dimension $n$ admits at most $%
\frac{1}{2}n\left( n+3\right) +5$ Lie symmetries (modulo a solution of the
heat equation) and if this is the case the space is flat.
\end{corollary}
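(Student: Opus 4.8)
The plan is to read off, from Theorem~\ref{The Lie of the heat equation}, exactly how many independent Lie symmetries each element of the homothetic algebra of $g_{ij}$ produces, and then to maximize the resulting count over all $n$-dimensional metrics. Write $I_g$ and $I_n$ for the numbers of gradient and nongradient Killing vectors, and let $h\in\{0,1,2\}$ record the homothetic vector: $h=0$ if no proper HV exists, $h=1$ if it exists and is nongradient, $h=2$ if it is gradient. Part (a) of the theorem shows that each nongradient KV, and a nongradient HV, contributes a single symmetry (the generator multiplying $c_1$ in~(\ref{LHEC.03})), while part (c) shows that each gradient KV and a gradient HV contributes two (the independent generators multiplying $c_3$ and $c_4$ in~(\ref{LHEC.04+}), the $c_5$-combination being absorbed into the universal symmetries). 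To these one adds the three symmetries present for every metric by linearity and autonomy, namely $\partial_t$, $u\partial_u$, and the solution family $b(t,x^i)\partial_u$ counted once (this is the meaning of ``modulo a solution''). Hence the total count is
\begin{equation*}
N = I_n + 2I_g + h + 3 = I + I_g + h + 3 ,
\end{equation*}
where $I=I_n+I_g$ is the dimension of the isometry algebra. The bookkeeping is checked against the two preceding corollaries: $N=\tfrac{1}{2}n(n+1)+3$ when all KVs are nongradient and $h=0$, and $N=7$ for $n=1$.

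Next I would bound the three variable terms separately. The isometry algebra satisfies $I\le \tfrac{1}{2}n(n+1)$, the classical maximum attained only on spaces of constant curvature. A gradient KV $\xi_i=S_{,i}$ obeys $\xi_{(i;j)}=0$ and, being a gradient, $\xi_{[i;j]}=0$, so $S_{;ij}=0$; thus $\nabla S$ is a covariantly constant (parallel) field. An $n$-dimensional space carries at most $n$ independent parallel fields, and exactly $n$ precisely when it admits a global parallel frame, i.e. when it is flat (the Ricci identity then forces the curvature to annihilate a spanning set, so $R=0$). Therefore $I_g\le n$, with equality if and only if the space is flat, while trivially $h\le2$. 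Substituting,
\begin{equation*}
N \le \frac{1}{2}n(n+1) + n + 2 + 3 = \frac{1}{2}n(n+3) + 5 ,
\end{equation*}
which is the asserted bound.

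For the equality statement, the three caps $I\le\tfrac{1}{2}n(n+1)$, $I_g\le n$ and $h\le2$ are independent, so $I+I_g+h$ reaches its maximum only when each is separately saturated; in particular $I_g=n$ is forced, and by the lemma above this alone means the space is flat. Conversely, flat $E^n$ realizes all three maxima at once---its homothetic algebra consists of the $\tfrac{1}{2}n(n-1)$ rotational (nongradient) KVs, the $n$ translational (gradient) KVs and the gradient HV $x^i\partial_i$ of Table 1---so the bound is attained, giving $N=\tfrac{1}{2}n(n+3)+5$ exactly when the space is flat. I expect the only delicate points to be the precise per-generator tally extracted from~(\ref{LHEC.03}) and~(\ref{LHEC.04+}) (checking that the surviving integration constants give generators independent of the universal $\partial_t,\,u\partial_u,\,b\partial_u$, so nothing is double-counted) and the geometric input $I_g\le n$ with equality iff flat; both the maximal-isometry bound and the parallel-frame characterization of flatness are standard and may be quoted.
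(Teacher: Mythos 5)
Your proposal is correct, and it is actually more complete than the paper's own argument. Both you and the paper count symmetries through Theorem \ref{The Lie of the heat equation} in the same way: one generator per nongradient KV or HV (the $c_{1}$-term of (\ref{LHEC.03})), two per gradient KV or HV (the $c_{3}$- and $c_{4}$-terms of (\ref{LHEC.04+})), plus the universal $\partial _{t}$, $u\partial _{u}$ and $b\partial _{u}$. The difference is in the maximization step: the paper's proof consists of the bare assertion that ``the space with the maximum homothetic algebra is the flat space,'' followed by the flat-space tally $1+2n+\tfrac{1}{2}n\left( n-1\right) +2+1+1$, so neither the upper bound for an arbitrary metric nor the converse claim (maximal count implies flatness) is actually argued there. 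You supply exactly these missing pieces: the general formula $N=I+I_{g}+h+3$, the three separate caps $I\leq \tfrac{1}{2}n\left( n+1\right) $, $I_{g}\leq n$, $h\leq 2$, and the observation that equality in the sum forces equality in each term, in particular $I_{g}=n$, which by the parallel-field characterization (a gradient KV has $S_{;ij}=0$, and $n$ independent parallel fields force $R^{i}{}_{jkl}=0$ via the Ricci identity) yields flatness; the converse direction is Table 1. What your route buys is a genuine proof of both halves of the corollary; what the paper's route buys is brevity, at the cost of leaving the extremality of flat space unproved. Two minor remarks: your phrase ``the three caps are independent'' is superfluous --- all that is needed is the trivial fact that a sum bounded term-by-term reaches the sum of the bounds only if every term reaches its own bound; and your cross-checks against the $n=1$ and constant-curvature corollaries are a good safeguard that the per-generator tally (one versus two) was read off correctly, which is indeed the only place where a miscount could occur.
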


\begin{proof}
The space with the maximum homothetic algebra is the flat space which admits 
$n$ gradient KVs, $\frac{1}{2}n\left( n-1\right) $ nongradient KVs and one
gradient HV. Therefore from Case 1, of the theorem we have $\left(
n+1\right) +\frac{1}{2}n\left( n-1\right) $ Lie symmetries. From Case 2. we
have another $\left( n+1\right) $ Lie symmetries and to these we have to add
the Lie symmetries $X=c_{1}\partial _{t}+a_{0}u\partial _{u}+b\left(
t,x^{i}\right) \partial _{u}$ where $b\left( t,x^{i}\right) $ is a solution
of the heat equation. The set of all these symmetries is $1+2n+\frac{1}{2}%
n\left( n-1\right) +2+1+1=$ $\frac{1}{2}n\left( n+3\right) +5$ \cite%
{Ibragimov book 1}.
\end{proof}

\bigskip

\subsubsection{Case $q\left( t,x,u\right) =q\left( u\right) $}

In this case we have the following result:

\begin{theorem}
The Lie symmetries of the heat equation with conduction $q(u)$ in a $n$
dimensional Riemannian space 
\begin{equation}
g^{ij}u_{ij}-\Gamma^{i}u_{i}-u_{t}=q\left( u\right)  \label{HEF.23}
\end{equation}
are constructed form the homothetic algebra of the metric as follows.

a. $Y^{i}$ is a HV/KV\newline
The Lie point symmetry is 
\begin{equation}
X=\left( 2c\psi t+c_{1}\right) \partial _{t}+cY^{i}\partial _{i}+\left(
a\left( t\right) u+b\left( t,x\right) \right) \partial _{u}  \label{HEF.24}
\end{equation}%
where the functions $a\left( t\right) ,$ $b\left( t,x\right) $ and $q\left(
u\right) $ satisfy the condition 
\begin{equation}
-a_{t}u+H\left( b\right) -\left( au+b\right) q_{,u}+\left( a-2\psi c\right)
q=0.  \label{HEF.25}
\end{equation}%
b. $Y^{i}=S^{,i}$ is a gradient HV/KV\newline
The Lie point symmetry is 
\begin{equation}
X=\left( 2\psi \int Tdt+c_{1}\right) \partial _{t}+TS^{,i}\partial
_{i}+\left( \left( -\frac{1}{2}T_{,t}S+F\left( t\right) \right) u+b\left(
t,x\right) \right) \partial _{u}  \label{HEF.26}
\end{equation}%
where $b\left( t,x\right) $ is a solution of the homogeneous heat equation,
the functions $T(t),$ $F\left( t\right) ~$and the flux $q\left( u\right) ~$%
satisfie the equation:%
\begin{equation}
\left( -\frac{1}{2}T_{,t}\psi +\frac{1}{2}T_{,tt}S-F_{,t}\right) u+H\left(
b\right) -\left( \left( -\frac{1}{2}T_{,t}S+F\right) u+b\right)
q_{,u}+\left( -\frac{1}{2}T_{,t}S+F\right) q-2\psi qT=0  \label{HEF.27}
\end{equation}
\end{theorem}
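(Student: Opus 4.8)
The plan is to obtain this theorem as a direct specialization of Theorem \ref{The Lie of the heat equation with flux} to a flux depending on $u$ alone. The generators (\ref{HEF.24}) and (\ref{HEF.26}) in cases (a) and (b) are \emph{identical} to (\ref{HEF.19}) and (\ref{HEF.21}); hence no re-integration of the symmetry conditions (\ref{HEF.01.1})--(\ref{HEF.01.7}) is needed, and the vector fields may be quoted unchanged. The entire content of the proof therefore lies in simplifying the two flux constraints (\ref{HEF.20}) and (\ref{HEF.22}) under the hypothesis $q=q(u)$.

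First I would note that in the prolongation formalism $t,x^{i},u$ are independent coordinates, so $q=q(u)$ means that the \emph{explicit} partial derivatives vanish, $q_{,t}=0$ and $q_{,i}=0$, while $q_{,u}$ is retained. Substituting these into (\ref{HEF.20}), the source term $(2\psi c_{2}qt+c_{1}q)_{t}$ reduces by the product rule: only the explicit factor $t$ is differentiated, leaving $2\psi c_{2}q$, and the term $c_{2}q_{,i}Y^{i}$ drops out. With $c=c_{2}$ this collapses (\ref{HEF.20}) to precisely (\ref{HEF.25}). The same computation applied to (\ref{HEF.22}) turns $(2\psi q\int Tdt+c_{1}q)_{t}$ into $2\psi qT$ and annihilates $Tq_{,i}S^{,i}$, yielding (\ref{HEF.27}).

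The step carrying the real content---and the only place I expect a genuine risk of error---is the product-rule bookkeeping in these source terms: one must differentiate only the explicit $t$-dependence of $2\psi c_{2}t$ (respectively of $2\psi\int Tdt$) and must \emph{not} reinstate a chain-rule contribution $q_{,t}$, since here $u$ is an independent coordinate rather than a function of $t$. Once this distinction is respected the two reductions are purely mechanical. Finally, to recover the ancillary statement in case (b) I would isolate the portion of the reduced constraint that is free of $q$; for a generic flux this portion must vanish by itself, and since its $u$-independent piece is exactly $H(b)$ this forces $H(b)=0$, so that $b(t,x)$ solves the homogeneous heat equation, while its coefficient of $u$ fixes the relation among $T$, $F$ and $\psi$. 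This closes the proof.
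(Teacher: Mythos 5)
Your proposal is correct and takes essentially the same route as the paper: the paper gives no separate proof of this theorem but presents it exactly as you do, namely as the specialization of Theorem \ref{The Lie of the heat equation with flux} (proved in Appendix B) to a flux with $q_{,t}=q_{,i}=0$, under which $\left( 2\psi c_{2}qt+c_{1}q\right) _{t}$ collapses to $2\psi c_{2}q$ and $\left( 2\psi q\int Tdt+c_{1}q\right) _{t}$ to $2\psi qT$, yielding (\ref{HEF.25}) (with $c=c_{2}$) and (\ref{HEF.27}). One caution on your closing remark: the genericity argument, applied consistently, would also annihilate the $q$-dependent coefficients (forcing $b=0$ and $-\frac{1}{2}T_{,t}S+F=0$), and the paper's own table (case $q=q_{0}u$, where $H(b)=q_{0}b$) shows that $H(b)=0$ is not forced for every $q(u)$ --- but this reflects a redundancy in the paper's statement of case (b), not a flaw in your reduction.
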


\bigskip

For various cases of $q\left( u\right) $ we obtain the results of the
following table.

\begin{center}
\begin{tabular}{|l|l|}
\hline
$q\left( u\right) $ & Lie Symmetry vector \\ \hline
$q_{0}u$ & $\left( \psi T_{0}t^{2}+2c\psi t+c_{1}\right) \partial
_{t}+\left( cY^{i}+T_{0}tS^{,i}\right) \partial _{i}+$ \\ 
& $~+\left( \left[ -2\psi cq_{0}t+a_{0}+T_{0}\left( -\frac{1}{2}S-\psi
q_{0}t^{2}-\frac{1}{2}t\right) \right] u+b\left( t,x\right) \right) \partial
_{u}\;\text{where $H(b)-bq_{0}=0$}$ \\ 
$q_{0}u^{n}$ & $\left( 2c\psi t+c_{1}\right) \partial _{t}+cY^{i}\partial
_{i}+\left( \frac{2\psi c}{1-n}u\right) \partial _{u}$ \\ 
$u\ln u$ & $c_{1}\partial _{t}+\left( Y^{i}+T_{0}e^{-t}K^{^{,}i}\right)
\partial _{i}+\left( a_{0}e^{-t}u\right) \partial _{u}~,~K^{^{,}i}$ \\ 
$e^{u}$ & $\left( 2c\psi t+c_{1}\right) \partial _{t}+cY^{i}\partial
_{i}+\left( -2\psi c\right) \partial _{u}$ \\ \hline
\end{tabular}
\end{center}

where $Y^{i}$ is a HV/KV , $S^{,i}$ is a gradient HV/KV and $K^{,i}$ is a
gradient KV.

\section{Conclusion}

\label{Conclusions}

The main result of this work is Proposition \ref{Coefficient xi third} which
states that the Lie symmetries of the PDEs of the form \ (\ref{GPE.10}) are
obtained from the conformal Killing vectors of the metric defined by the
coefficients $A_{ij},$ provided $A_{ij,u}=0.$ This result is quite general
and covers many well known and important PDEs of Physics. The geometrization
of the Lie symmetries and their association with the collineations of the
metric, dissociates their determination from the dimension of the space,
because the collineations of the metric depend (in general) on the type of
the metric and not on the dimensions of the space where the metric resides.
Furthermore, this association provides a wealth of results of Differential
Geometry on collineations which is possible to be used in the determination
of the Lie symmetries.

We have applied the above theoretical results to two cases. The first case
concerns the two dimensional wave equation in an inhomogeneous medium and
shows the application of the general results in practice. The second example
concerns the determination of the Lie symmetry vectors of the heat
conduction equation with a flux in a Riemannian space, a problem which has
not been considered before in the literature. We proved that the Lie
symmetry algebra of this PDE is generated from the homothetic algebra of the
metric. We specialized the equation to the homogeneous heat conduction
equation and regained the existing results for the Newtonian case. Finally
it can be shown that the Lie symmetries of the Poisson equation in a
Riemannian space which have been computed in \cite{Boskov} are obtained from
the present formalism in a straightforward manner.

\section*{Appendix A}

We prove the statement for $n=2$. The generalization to any $n$ is
straightforward. For a general proof see \cite{BlumanPaper}. We consider $%
A^{ij}$ as a matrix and assume that the inverse of this matrix exists. We
denote the inverse matrix with $B_{ij}$ and we get form (\ref{Po.1}):%
\begin{align}
B_{ij}A^{ij}\xi _{.,u}^{k}+B_{ij}A^{kj}\xi _{.,u}^{i}+B_{ij}A^{ik}\xi
_{.,u}^{j}& =0  \notag \\
2\xi _{.,u}^{k}+\delta _{i}^{k}\xi _{.,u}^{i}+\delta _{j}^{k}\xi _{.,u}^{j}&
=0\Rightarrow  \notag \\
\xi _{.,u}^{k}& =0.  \label{Po.3}
\end{align}

Now assume that the matrix $A^{ij}$ does not have an inverse. Then we
consider $n=2$ and write: 
\begin{equation*}
\lbrack A^{ij}]=\left[ 
\begin{tabular}{ll}
$A_{11}$ & $A_{12}$ \\ 
$A_{12}$ & $A_{22}$%
\end{tabular}
\ \ \ \ \right] \Rightarrow\det A^{ij}=A^{11}A^{22}-(A^{12})^{2}=0
\end{equation*}
where at least one of the $A^{ij}\neq0.$ Assume $A^{11}\neq0.$ Then equation
(\ref{Po.1}) for $i=j=k=1$ gives%
\begin{equation*}
3A^{11}\xi_{.,u}^{1}=0\Rightarrow\xi_{.,u}^{1}=0.
\end{equation*}
The same equation for $i=j=k=2$ gives%
\begin{equation*}
3A^{22}\xi_{.,u}^{2}=0
\end{equation*}
therefore either $\xi_{.,u}^{2}=0$ or $A^{22}=0.$ If $A^{22}=0$ then from
the condition $\det A^{ij}=0$ we have $A^{12}=0$ hence $A_{ij}=0$ which we
do not assume. Therefore $\xi_{.,u}^{2}=0.$

We consider now equations $i=j\neq k$ and find:%
\begin{equation*}
A^{ii}\xi_{.,u}^{k}+A^{ki}\xi_{.,u}^{i}+A^{ik}\xi_{.,u}^{i}=0.
\end{equation*}
Because $i\neq k$ this gives $A^{ii}\xi _{.,u}^{k}=0$ and because we have
assumed $A^{11}\neq 0$ it follows $\xi _{.,u}^{2}.$ Therefore again we find $%
\xi _{.,u}^{k}=0.$

\section{Appendix B}

Condition (\ref{HEF.01.7}) means that $\xi ^{i}$ is a CKV of the metric $%
g_{ij}$ with conformal factor $a(t,x^{k})-\lambda (t,x^{k}).$

Condition (\ref{HEF.01.6}) implies $\xi ^{k}=T\left( t\right) Y^{k}\left(
x^{j}\right) $ where $Y^{i}$ is a HV with conformal factor $\psi ,$ that is,
we have: 
\begin{equation*}
L_{Y^{i}}g_{ij}=2\psi g_{ij}\text{ ~,~}\psi \text{=constant.}
\end{equation*}%
and%
\begin{equation*}
\xi _{,t}^{t}=a-\lambda
\end{equation*}%
from which follow 
\begin{equation}
\xi ^{t}\left( t\right) =2\psi \int Tdt.  \label{HEF.01.08}
\end{equation}%
\begin{equation}
-2g^{ik}a_{,i}+T_{,t}Y^{k}=0.  \label{HEF.01.09}
\end{equation}%
Condition (\ref{HEF.01.4}) becomes:%
\begin{equation*}
H(a)u+H(b)+(a-\xi _{,t}^{t})q=\xi ^{t}q_{,t}+T(t)Y^{k}q_{,k}+\eta
q_{,u}\Rightarrow
\end{equation*}%
\begin{equation*}
H(a)u+H(b)-\left( au+b\right) q_{,u}+aq-(\xi ^{t}q)_{,t}-T(t)Y^{k}q_{,k}=0
\end{equation*}%
\begin{equation}
H\left( a\right) u+H\left( b\right) -\left( au+b\right) q_{,u}+aq-\left(
2\psi q\int Tdt\right) _{t}-Tq_{,i}Y^{i}=0.  \label{HEF.12}
\end{equation}

We consider the following cases:

Case 1 $Y^{k}$ is a HV/KV \newline
From (\ref{HEF.01.09}) we have that $T_{,t}=0\rightarrow T\left( t\right)
=c_{2}$ and $a_{,i}=0\rightarrow a\left( t,x^{k}\right) =a\left( t\right) .$
Then (\ref{HEF.12}) becomes%
\begin{equation}
-a_{t}u+H\left( b\right) -\left( au+b\right) q_{,u}+aq-\left( 2\psi
c_{2}qt+c_{1}q\right) _{t}-c_{2}q_{,i}Y^{i}=0  \label{HEF.12.1}
\end{equation}

Case 2 $Y^{k}$ is a gradient HV/KV, that is $Y^{k}=S^{,k}$ \newline
From (\ref{HEF.01.09}) we have%
\begin{equation}
a\left( t,x^{k}\right) =-\frac{1}{2}T_{,t}S+F\left( t\right) .
\label{HEF.13}
\end{equation}%
Replacing in (\ref{HEF.12}) we find the constraint equation:%
\begin{align}
0& =\left( -\frac{1}{2}T_{,t}\psi +\frac{1}{2}T_{,tt}S-F_{,t}\right)
u+H\left( b\right) +  \notag \\
& -\left( \left( -\frac{1}{2}T_{,t}S+F\right) u+b\right) q_{,u}+\left( -%
\frac{1}{2}T_{,t}S+F\right) q-\left( 2\psi q\int Tdt+c_{1}q\right)
_{t}-Tq_{,i}S^{,i}.  \label{HEF.14}
\end{align}%
\newpage


\begin{thebibliography}{99}
\bibitem{Bluman book ODES} Bluman W G and Kumei S 1989 \emph{Symmetries and
Differential Equations} (Springer Verlag New York)

\bibitem{Olver Book} Olver P J 1986 \emph{Application of Lie groups to
differential equations }(Springer Graduate texts in Mathematics, New York:
Springer)

\bibitem{Stephani book ODES} Stephani H 1989 \emph{Differential Equations:
Their Solutions using Symmetry }(Cambridge University Press)

\bibitem{Prince Crampin (1984) 1} Prince G E and Crampin M 1984 Gen.
Relativ. Gravit. \textbf{16} 921

\bibitem{Aminova 2006} Aminova A V and Aminov N A 2006 Sbornic Mathematics 
\textbf{197} 951

\bibitem{Aminova2010} Aminova A V and Aminov N A 2010 Sbornic Mathematics 
\textbf{201} 631

\bibitem{FerozeMahomedQadir} Feroze T, Mahomed F M and Qadir A 2006
Nonlinear Dynamics \textbf{45} 65

\bibitem{TsamparlisGRG} Tsamparlis M and Paliathanasis A 2010 Gen. Relativ.
Gravit. \textbf{42} 2957

\bibitem{TsamparlisGRG2} Tsamparlis M and Paliathanasis A 2011 Gen. Relativ.
Gravit. \textbf{43} 1861

\bibitem{Barnes} Barnes A 1993 Class. Quantum Grav. \textbf{10} 1139

\bibitem{Ibragimov book 1} Ibragimov Nail H 1985\emph{\ `}Transformation
Groups applied to Mathematical Physics' D. Reidel Publishing Co, Dordrecht, 1

\bibitem{Boskov} Bozhkov Y Freite I L\emph{\ }2010 J Differential Equations 
\textbf{249} 872

\bibitem{Katzin} Katzin G H, Levine J and Davis R W 1969 J. Math. Phys. 
\textbf{10} 617

\bibitem{HallR} Hall G S and Roy I M 1997 Gen. Relativ. Gravit. \textbf{29}
827

\bibitem{Sarlet} Sarlet W and Cantrijin F 1981 J. Phys. A: Math. Gen. 
\textbf{14} 479

\bibitem{Kalotas} Kalotas T M and Wybourne B G 1982 J. Phys A: Math. Gen. 
\textbf{15} 2077

\bibitem{Feroze2011} Feroze T and Hussain I 2011 Journal of Geometry and
Physics \textbf{61} 658

\bibitem{Hussain2010} Hussain I 2010 Gen. Relativ. Gravit. \textbf{42} 1791

\bibitem{Feroze2010} Feroze T 2010 Modern Phys Lett. A25

\bibitem{BlumanPaper} Bluman W G 1990 J. Math. Anal. Applic. \textbf{145} 52
\end{thebibliography}
\end{document}